\newtheorem{theorem}{Theorem}[section]
\begin{document}

\newcommand{\Ito}{It\^{o} }
\newcommand{\bz}{{\bf z}}
\newcommand{\vv}{{\bf v}}
\newcommand{\ww}{{\bf w}}
\newcommand{\yy}{{\bf y}}
\newcommand{\xx}{{\bf x}}
\newcommand{\nn}{{\bf n}}
\newcommand{\uu}{{\bf u}}
\newcommand{\mm}{{\bf m}}
\newcommand{\qq}{{\bf q}}
\newcommand{\aba}{{\bf a}}
\newcommand{\pp}{{\bf p}}
\newcommand{\OO}{\mathbb{O}}
\newcommand{\II}{\mathbb{I}}
\newcommand{\IR}{\mathbb{R}}
\newcommand{\IC}{\mathbb{C}}
\newcommand{\IB}{\mathbb{B}}
\newcommand{\IZ}{\mathbb{Z}}
\newcommand{\half}{\frac{1}{2}}
\newcommand{\halff}{1/2}
\newcommand{\bea}{\begin{eqnarray*}}
\newcommand{\eea}{\end{eqnarray*}}
\newcommand{\beaq}{\begin{eqnarray}}
\newcommand{\eeaq}{\end{eqnarray}}
\newcommand{\bfalpha}{\mbox{\boldmath $\alpha$ \unboldmath} \hskip -0.05 true in}
\newcommand{\bfgamma}{\mbox{\boldmath $\gamma$ \unboldmath} \hskip -0.05 true in}
\newcommand{\bfmu}{\mbox{\boldmath $\mu$ \unboldmath} \hskip -0.05 true in}
\newcommand{\bfnu}{\mbox{\boldmath $\nu$ \unboldmath} \hskip -0.05 true in}
\newcommand{\bfxi}{\mbox{\boldmath $\xi$ \unboldmath} \hskip -0.05 true in}
\newcommand{\bfphi}{\mbox{\boldmath $\phi$ \unboldmath} \hskip -0.05 true in}
\newcommand{\bfpi}{\mbox{\boldmath $\pi$ \unboldmath} \hskip -0.05 true in}
\newcommand{\beq}{\begin{equation}}
\newcommand{\eeq}{\end{equation}}
\newcommand{\bfomega}{\mbox{\boldmath $\omega$ \unboldmath} \hskip -0.05 true in}
\newcommand{\bfeta}{\mbox{\boldmath $\eta$ \unboldmath} \hskip -0.05 true in}
\newcommand{\bfepsilon}{\mbox{\boldmath $\epsilon$ \unboldmath} \hskip -0.05 true in}
\newcommand{\bftheta}{\mbox{\boldmath $\theta$ \unboldmath} \hskip -0.05 true in}
\newcommand{\om}{\omega}
\newcommand{\Om}{\Omega}
\newcommand{\bom}{\bfomega}
\newcommand{\III}{\rm III}
\def\tab{ {\hskip 0.15 true in} }
\def\vtab{ {\vskip 0.1 true in} }
 \def\htab{ {\hskip 0.1 true in} }
 \def\ntab{ {\hskip -0.1 true in} }
 \def\vtabb{ {\vskip 0.0 true in} }

\begin{center}
{\bf \LARGE Rate of Entropy Production in Stochastic Mechanical Systems}
\end{center}

\begin{center}
Gregory S. Chirikjian \\
National University of Singapore \\
November 27, 2021
\end{center}

\begin{abstract}
Entropy production in stochastic mechanical systems 
is examined here with strict bounds on its rate. Stochastic mechanical  systems include pure diffusions in Euclidean space or on Lie groups, as well as systems evolving on phase space  for which the fluctuation-dissipation theorem applies, i.e.,  return-to-equilibrium processes. Two separate ways for ensembles of such mechanical systems forced by noise to reach equilibrium are examined here. First, a restorative potential and damping can be applied, leading to a classical return-to-equilibrium process wherein energy taken out by damping can balance the energy going in from the noise. Second, the process evolves on a compact configuration space (such as random walks on spheres, torsion angles in chain molecules, and rotational Brownian motion) lead to long-time solutions that are constant over the configuration space, regardless of whether or not damping and random forcing balance. This is a kind of potential-free equilibrium distribution resulting from topological constraints. Inertial and noninertial (kinematic) systems are considered. These systems can
consist of unconstrained particles or more complex systems with constraints, such as rigid-bodies or linkages.
These more complicated systems evolve on Lie groups and model phenomena such as rotational Brownian motion and nonholonomic robotic systems. In all cases, it is shown that the rate of entropy production is closely related to the appropriate concept of Fisher information matrix of the probability density defined by the Fokker-Planck equation. Classical results from information theory are then repurposed to provide computable bounds on the rate of entropy production in stochastic mechanical systems.
\end{abstract}

\section{Introduction}

According to the second law of thermodynamics, at the macroscopic scale entropy is nondecreasing.
Statistical mechanical arguments for this result were established by Boltzmann's classical work for
ideal gasses involving deterministic collision models between particles. A completely different approach
is the stochastic mechanical model in which each element in the larger system is being forced by random Brownian motion and viscous damping. Such models are not limited to particle systems, and can either be modelled as having inertia or can be purely kinematic. They can be complex molecules or even systems with nonholonomic kinematic constraints such as robots.

Stochastic mechanical models (i.e., stochastic differential equations and associated Fokker-Planck equations) are examined here with the goal of establishing strict bounds on the rate of entropy production. Within this context, each representative in an ensemble of a stochastic mechanical systems has a family of probability density functions (indexed by time) from which entropy is computed.

A probability density function (pdf) on a measure space $(X,\mu)$ which is indexed by time
is a function $f:X\times \IR_{\geq 0} \,\longrightarrow\, \IR_{\geq 0}$ such that
$$ \int_X f(\xx,t) \, d\mu(\xx) \,=\,1\,. $$
The entropy of $f$ at each value of time is defined as
\beq
S(t) \,\doteq\, -\int_X f(\xx,t) \log f(\xx,t) \, d\mu(\xx) \,.
\eeq
The choice of base of the logarithm amounts to a choice of measurement units for $S$, as a different base will result in a different scaling of $S(t)$. Throughout this article, $\log \doteq \log_e = \ln$.

When considering a statistical mechanical system, the entropy of a time-varying probability density on phase space is defined as\footnote{The inclusion of the Boltzmann constantt $k_B$ here is for consistency with the statistical mechanics literature.}
\beq
S_B(t) \,\doteq\, -k_B \int_{\qq} \int_{\pp} f(\pp,\qq,t) \log f(\pp,\qq,t) \, d\pp \, d\qq
\eeq
where $\qq \in Q \subseteq \IR^n$ is a set of coordinates, and $\pp \in \IR^n$ are the corresponding conjugate momenta. The Lebesgue measure on the $2n$-dimensional phase space, $d\pp \, d\qq = dq_1 ... dq_n dp_1 ... dp_n$, is invariant under coordinate changes, though such changes do affect the bounds of integration $Q$, unless $Q = \IR^n$.

The configurational probability density is the marginal distribution
\beq
f(\qq,t) \,\doteq\, \frac{1}{\sqrt{M(\qq)}} \int_{\pp} f(\pp,\qq,t) \, d\pp
\label{fqdef}
\eeq
which is defined in this way so that
$$ \int_{\qq \in Q} f(\qq,t) \,\sqrt{M(\qq)}\, d\qq \,=\, 1 \,. $$
The corresponding configurational entropy is
\beq
S_Q(t) \,\doteq\, -\int_{\qq \in Q} f(\qq,t) \log f(\qq,t) \,\sqrt{M(\qq)}\, d\qq
\label{sqdef}
\eeq
where $M(\qq)$ is the metric tensor for the configuration space, which is the mass matrix in the case of a mechanical system with inertia. In the case when $\qq$ parameterizes a unimodular Lie group, $G$, such as the group of rotations or the Euclidean motion group, then the entropy of a time-indexed pdf $f:G\times\IR_{\geq 0}\,\rightarrow\,\IR_{\geq 0}$ is
\beq
S_G(t) \,\doteq\, -\int_{G} f(g,t) \log f(g,t) \,dg
\label{Gent}
\eeq
where the Haar measure $dg$ takes on different appearances under changes of parametrization but the value
of the integral is independent of parametrization and it
is invariant under shifts of the form $g \rightarrow g_0 g$ and $g \rightarrow g g_0$ for any fixed $g_0 \in G$.

The main topic of this article is to study the rate of entropy increase in stochastically forced mechanical systems. That is, for any of the entropies defined above, to calculate or bound $\dot{S}$. In particular, by simply
moving the time derivative inside the integral and using the product rule from Calculus,
\beq
\dot{S}(t) \,=\, - \int_{X} \left\{\frac{\partial f}{\partial t} \log f + \frac{\partial f}{\partial t}
\right\} \,d\mu(\xx) .
\label{sdotdef}
\eeq
But since $f(\xx,t)$ is a probability density function at each value of time, whose integral is equal to $1$,
$$ \int_{X} \frac{\partial f}{\partial t} \,d\mu(\xx) =
\frac{d}{dt} \int_{X} f(\xx,t) \,d\mu(\xx) = 0, $$
and so the second term in the above braces integrates to zero. Consequently,
\beq
\dot{S}(t) \,=\, - \int_{X} \frac{\partial f}{\partial t} \log f \,d\mu(\xx) .
\label{sdotdef111}
\eeq


It is also possible to bound the value of entropy itself.
The function $\Phi(x) = - \log x$ is a convex function. Consequently,
Jensen's inequality gives
$$ S \,=\, \int_X f(\xx) \Phi(f(\xx)) d\mu(\xx) \,\geq\, \Phi \left(\|f\|^2\right) $$
or
\beq
S \,\geq\, -\log \left(\|f\|^2\right)
\label{entmag1}
\eeq
where
$$ \|f\|^2 \,\doteq\, \int_X |f(\xx)|^2 d\mu(\xx) \,. $$
As a consequence of Parseval's inequality, (\ref{entmag1}) can then be stated in Fourier space.
This is true not only for the case of Euclidean space, but for wide classes of unimodular Lie groups
\cite{harmonic}.

Moreover, $S(t)$ can be bounded from above in some contexts. For example, on Euclidean spaces it is well known that Gaussian distributions have maximum entropy over all pdfs with a given mean and covariance.
Therefore, if $f(\xx,t)$ is
an arbitrary time-evolving pdf on $\IR^d$ with mean $\bfmu_f(t)$ and covariance $\Sigma_f(t)$, and if
$\rho_{\mu, \Sigma}(\xx)$ denotes a Gaussian distribution with mean $\bfmu$ and covariance $\Sigma$, then
\beq
S_f(t) \,\leq\, S_{\rho_{\mu_f, \Sigma_f}}(t) \,.
\label{maxent}
\eeq
On a compact space such as the circle or rotation group, the uniform distribution has the absolute maximum entropy of all distributions on those spaces.


%
%


\section{Rate of Entropy Production for Stochastic Processes on Euclidean Space}

The stochastic mechanical models addressed here are described as stochastic differential equations
forced by Brownian noise (i.e., Gaussian white noise, or equivalently, increments of Wiener processes).

\subsection{Review of Stochastic Differential Equations}

Stochastic differential equations (SDEs) are
forced by Gaussian white noises $dw_i$, which are increments of uncorrelated unit-strength Weiner processes.
These define Brownian motion processes. That is, each $dw_i(t)$ can be viewed as an independent random draw from a one-dimensional Gaussian distribution with zero mean and unit variance. Let $d$ denote the dimension of a Euclidean space on which the random process evolves. For systems with inertia, $d=2n$ is phase space, and for noninertial systems $d=n$. The
independent uncorrelated unit strength white noises $dw_i$ form the components of a vector $d{\bf w} \in \IR^d$.
For example, given the stochastic differential equation on $\IR^d$
$$ d\xx = B \,d{\bf w} $$
where $B \in \IR^{d\times d}$ is a constant full-rank matrix, the distribution $f(\xx,t)$ describing  the ensemble of an infinite number of
trajectories will satisfy
$$ \frac{\partial f}{\partial t} \,=\, \frac{1}{2} \sum_{i,j=1}^{d} D_{ij}
\frac{\partial^2 f}{\partial x_i \partial x_j}\,. $$
where $D = [D_{ij}] = BB^T$.
The solution to this equation subject to initial conditions $f(\xx,0) = \delta(\xx - {\bf 0})$ is
the time-varying Gaussian distribution
\beq
f(\xx,t) \,=\, \frac{1}{(2\pi)^{d/2} |Dt|^{1/2}} \exp(-\frac{1}{2} \xx^T (Dt)^{-1} \xx)
\label{Gaussian}
\eeq
where $|A|$ denotes the determinant of a square matrix $A$.
The entropy for this can be computed in closed form as \cite{Shannon}
\beq
S(t) \,=\, \log\left\{(2\pi e)^{d/2} |\Sigma(t)|^{1/2} \right\} \,.
\label{gaussianentropy}
\eeq
where $\Sigma(t) = Dt$ in this context. More generally (\ref{gaussianentropy}) can be used as the upper bound
in (\ref{maxent}) with $\Sigma = \Sigma_f$ since entropy is independent of the mean.

As an example of (\ref{entmag1}), applying it to a Gaussian distribution on $\IR^d$ gives
$$ S(t) \,\geq\, \log\left\{(4\pi)^{d/2} |Dt|^{1/2}\right\} . $$
Comparing with the exact expression in (\ref{gaussianentropy}) verifies this since $4\pi < 2\pi e$ and $\log(x)$ is a monotonically increasing function.

There are two major kinds of stochastic differential equations (SDEs), \Ito and Stratonovich. Both are
forced by Gaussian white noises $dw_i$. In the simple example above, \Ito and Stratonovich interpretations
lead to the same result, but in more complex cases where the coupling matrix $B$ is configuration-dependent,
the two intepretations will differ. A brief review of the main features of these two different interpretations of
SDEs is given here based on the longer exposition in \cite{stochastic}.

Historically, the \Ito interpretation came first. If
\begin{equation}
dx_i(t) = a_i(x_1(t),...,x_d(t),t) dt + \sum_{j=1}^{m}
B_{ij}(x_1(t),...,x_d(t),t) \, dw_j(t) \htab {\rm for} \htab i=1,...,d
\label{stratstoch1}
\end{equation}
is an \Ito SDE describing a random process on $\IR^d$,
where now $B \in \IR^{d\times m}$,
then the corresponding Fokker-Planck equation governing
the probability density of the ensemble of states, $f(\xx, t)$, is \cite{stochastic}
\begin{equation}
\frac{\partial f({\bf x},t)}{\partial t} =
- \sum_{i=1}^{d} \frac{\partial}{\partial x_i}\left(a_i({ \bf x},t) f({ \bf x},t)\right)
+ \half \sum_{i,j=1}^{d}
\frac{\partial^2}{\partial x_i \partial x_j}
\left(\sum_{k=1}^{m} B_{ik}({\bf x},t) B_{kj}^{T}({\bf x},t) f({ \bf x},t)\right),
\label{fokplan2}
\end{equation}
\Ito SDEs are popular in mathematical statistics contexts, measurement and filtering theory, and in finance, because of the ease with which expectations can be taken, and the associated martingale properties. In engineering contexts when modeling physical processes the Stratonovich interpretation of SDEs described below is more popular because standard rules of
Calculus can be used. For this reason the Stratonovich interpretation is popular in differential geometric contexts since moving between coordinate patches involves Calculus operations.
A Stratonovich SDE describing the exact same random process as the \Ito SDE given above is written as
\begin{equation}
dx_i(t) = a_i^s(x_1(t),...,x_d(t),t) dt + \sum_{j=1}^{m}
B_{ij} (x_1(t),...,x_d(t),t) \,\circledS \, dw_j(t) \htab {\rm for} \htab i=1,...,d
\label{stratstoch1a}
\end{equation}
where
\beq
a_i(\xx,t) = a_i^s(\xx,t) + \half \sum_{j=1}^{m} \sum_{k=1}^{d}
\frac{\partial B_{ij} }{\partial x_k} B_{kj} \,.
\label{conveqwufe}
\eeq
This illustrates that there is very simple way to interconvert between \Ito and Stratonovich by either adding or subtracting the last term in (\ref{conveqwufe}).

The Stratonovich form of the Fokker-Planck equation (FPE) is written as \cite{stochastic}
\beq
\frac{\partial f}{\partial t} = - \sum_{i=1}^d \frac{\partial}{\partial x_i} \left(a_i^s f \right)
+ \half \sum_{i,j=1}^{d} \frac{\partial}{\partial x_i} \left[\sum_{k=1}^{m} B_{ik}   \frac{\partial}{\partial x_j} (B_{jk}  f) \right]
\label{fpstrateqgqwow}
\eeq
When $B$ is independent of the configuration variable $\xx$, \Ito snd Stratonovich versions of the FPE are always the same, as can be seen from (\ref{conveqwufe}) where the discrepancy between the two versions of drift term vanishes as the partial derivatives of $B$ with respect to ${\bf x}$ vanish. This is a sufficient condition for \Ito and Stratonovich SDEs to yield the same FPE, but is not a necessary for this to be the case, as will be seen in later examples. Similar equations hold for processes evolving on manifolds and Lie groups, as will be discussed later in the paper.

The main topic addressed here is the rate at which $S(t)$ changes. This can be observed by substituting in the
solution of the Fokker-Planck equations into the definition of $\dot{S}$ in (\ref{sdotdef}).

\subsection{Rate Of Entropy Production}

Using the Stratonovich form of the FPE in (\ref{fpstrateqgqwow}), we arrive at the following theorem
\begin{theorem}
The rate of entropy production for $f(\xx,t)$ governed by (\ref{fpstrateqgqwow}) evolving freely on Euclidean space is positive and bounded from below by
\beq
\dot{S} \,\geq\, \frac{1}{2} \int_{\IR^d} {\rm trace}\left[\frac{(B(\xx,t)^T \nabla f) (B(\xx,t)^T \nabla f)^T}{f}\right] \,d\xx
\label{entrate1}
\eeq
when
\beq
\frac{\partial}{\partial x_i}\left(a_i^s - \sum_{k,j} B_{ik}  \frac{\partial B_{jk} }{\partial x_j}\right) \geq 0
\label{condf332222dg}
\eeq
with equality holding when
\beq
a_i^s - \sum_{k,j} B_{ik}  \frac{\partial B_{jk} }{\partial x_j} = c_i
\label{constdrift}
\eeq
is constant for all values of $i$. In the case when $B$ is a constant and
$D \doteq BB^T$, then
\beq
\dot{S} \,=\, \frac{1}{2}  {\rm trace}\left[D F(t) \right]\,.
\label{entrate2}
\eeq
where
$$ F(t) =  \int_{\IR^d} \frac{(\nabla f) (\nabla f)^T}{f} \,d\xx $$
is the Fisher information matrix of $f(\xx,t)$.
\end{theorem}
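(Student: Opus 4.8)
The plan is to begin from the reduced expression (\ref{sdotdef111}), namely $\dot{S} = -\int_{\IR^d} \frac{\partial f}{\partial t}\log f \, d\xx$, and substitute the Stratonovich Fokker--Planck equation (\ref{fpstrateqgqwow}) for $\partial f/\partial t$. This separates $\dot S$ into a drift contribution $-\int(-\sum_i \partial_{x_i}(a_i^s f))\log f\,d\xx$ and a diffusion contribution. Each piece I would simplify by integration by parts, discarding boundary terms under the standing assumption that $f$ and its first derivatives decay fast enough at infinity (which holds for the Gaussian-type solutions of interest). The recurring device is that after moving a derivative onto $\log f$, the factor $\partial_{x_i}(\log f) = (\partial_{x_i}f)/f$ appears, which is what ultimately produces the Fisher-information structure.

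First I would treat the diffusion term. After one integration by parts in $x_i$ it becomes $\frac12\int \sum_{i,j,k} B_{ik}\,\partial_{x_j}(B_{jk}f)\,(\partial_{x_i}f)/f\,d\xx$; expanding $\partial_{x_j}(B_{jk}f) = (\partial_{x_j}B_{jk})f + B_{jk}\,\partial_{x_j}f$ splits this into two pieces. The piece carrying two factors of $\nabla f$ collapses, via $\sum_k B_{ik}B_{jk} = (BB^T)_{ij} = D_{ij}$, into the manifestly non-negative quadratic form $G = \frac12\int (\nabla f)^T D\,(\nabla f)/f \, d\xx$. Using $(\nabla f)^T D\,\nabla f = (B^T\nabla f)^T(B^T\nabla f) = \mathrm{trace}[(B^T\nabla f)(B^T\nabla f)^T]$ identifies $G$ with exactly the right-hand side of (\ref{entrate1}); this is the term that survives in the lower bound.

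The heart of the argument is then to show that everything left over --- the drift contribution together with the remaining ``spurious drift'' piece $(\partial_{x_j}B_{jk})f$ of the diffusion term --- is non-negative. I would integrate these leftover terms by parts once more to bring them to the divergence form $\int_{\IR^d}\sum_i \partial_{x_i}\big(a_i^s - \sum_{k,j}B_{ik}\,\partial_{x_j}B_{jk}\big)\,f \, d\xx$; since $f\ge 0$, hypothesis (\ref{condf332222dg}) makes this integral non-negative, yielding $\dot S \ge G$, and the integrand vanishes identically precisely when the bracketed expression is constant, i.e. under (\ref{constdrift}), which gives equality. The main obstacle I anticipate is the careful bookkeeping of the index ranges (noise index $k$ versus spatial indices $i,j$) and of the factors of $\tfrac12$ when recombining the drift and cross terms after the successive integrations by parts, since a misplaced factor alters the precise drift expression appearing in the hypothesis.

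Finally, the constant-$B$ case follows as a specialization: all derivatives $\partial_{x_j}B_{jk}$ vanish, so the spurious-drift term disappears and $D = BB^T$ pulls out of the integral, giving $G = \frac12\,\mathrm{trace}\big[D\int(\nabla f)(\nabla f)^T/f\,d\xx\big] = \frac12\,\mathrm{trace}[DF(t)]$ with $F(t)$ the Fisher information matrix; the residual drift term integrates to zero when the drift is constant (in particular in the driftless free-diffusion setting), because $\int \partial_{x_i} f\,d\xx = 0$, so the inequality becomes the equality (\ref{entrate2}). As a sanity check I would verify this against the exact Gaussian entropy (\ref{gaussianentropy}), where $F(t) = \Sigma(t)^{-1} = (Dt)^{-1}$ gives $\dot S = \tfrac12\,\mathrm{trace}[D(Dt)^{-1}] = d/(2t)$, matching the time derivative of (\ref{gaussianentropy}).
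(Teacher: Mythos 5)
Your proposal is correct and follows essentially the same route as the paper's own proof: substitute the Stratonovich FPE into $\dot S = -\int_{\IR^d} \frac{\partial f}{\partial t}\,\log f\,d\xx$, integrate by parts with vanishing boundary terms, split the diffusion term via $\partial_{x_j}(B_{jk}f)=(\partial_{x_j}B_{jk})f+B_{jk}\,\partial_{x_j}f$ into the nonnegative Fisher-type quadratic form (the right-hand side of (\ref{entrate1})) plus a spurious drift, and push the leftover drift terms into divergence form against $f\ge 0$, with the constant-$B$ case then giving (\ref{entrate2}). The factor-of-$\tfrac{1}{2}$ bookkeeping you flag as the main hazard is indeed the one delicate point — a careful recollection produces $\tfrac{1}{2}\sum_{k,j}B_{ik}\,\partial_{x_j}B_{jk}$ rather than the coefficient written in (\ref{condf332222dg}) — but the paper's own recollected expression drops the same factor, so your argument matches the paper's proof step for step.
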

\begin{proof}
The $\partial f/\partial t$ term in
$$ \dot{S} \,=\, - \int_{\IR^d} \frac{\partial f}{\partial t} \log f \,d\xx  $$
can be expressed in terms of spatial derivatives by substituting in the FPE. This is written as two terms using integration by parts with the surface terms at infinity vanishing due to the fact that the process evolves freely and the pdf must decay to zero at infinity. First,
$$ - \int_{\IR^d} \left\{ - \sum_{i=1}^d \frac{\partial}{\partial x_i} \left(a_i^s f \right) \right\} \log f \,d\xx
\,=\, - \int_{\IR^d} \sum_{i=1}^d a_i^s \frac{\partial f}{\partial x_i} \,d\xx \,. $$
Second,
$$ - \int_{\IR^d} \left\{
\half \sum_{i,j=1}^{d} \frac{\partial}{\partial x_i} \left[\sum_{k=1}^{m} B_{ik}   \frac{\partial}{\partial x_j} (B_{jk}  f) \right] \right\} \log f \,d\xx
\,=\, \frac{1}{2} \int_{\IR^d} \frac{1}{f} \sum_{i,j=1}^{d} \left[\sum_{k=1}^{m} B_{ik}   \frac{\partial}{\partial x_j} (B_{jk}  f) \right] \frac{\partial f}{\partial x_i}
\,d\xx \,. $$
Expanding
$$ \frac{\partial}{\partial x_j} (B_{jk}  f) = \frac{\partial B_{jk} }{\partial x_j} f +
B_{jk}  \frac{\partial f}{\partial x_j} $$
and recollecting terms gives
$$ \dot{S} \,=\, \int_{\IR^d} \sum_{i=1}^d \left\{
-a_i^s + \sum_{i,j=1}^{d} B_{ik}  \frac{\partial B_{jk} }{\partial x_j}\right\} \frac{\partial f}{\partial x_i} d\xx \,+\,
\half \int_{\IR^d} \frac{1}{f} \sum_{k=1}^{m}
\left(\sum_{i=1}^{d}  B_{ik}  \frac{\partial f}{\partial x_i}\right)
\left(\sum_{j=1}^{d} B_{jk}  \frac{\partial f}{\partial x_j} \right)
d\xx \,.
$$
The second integral is always nonnegative, as it is a positive semi-definite quadratic form and can be written as
$$ \frac{1}{2} {\rm trace}\left[\int_{\IR^d}\frac{(B(\xx,t)^T \nabla f) (B(\xx,t)^T \nabla f)^T}{f} d\xx\right]\,. $$
The first term can have either sign. However, if
$$ a_i^s - \sum_{k,j} B_{ik}  \frac{\partial B_{jk} }{\partial x_j} = c_i $$
is constant, then the first integral will vanish, and if by integration by parts the derivative in the first term is transferred over, the condition in the statement of the theorem will result since $f \geq 0$.
\end{proof}

When (\ref{condf332222dg}) holds it is clear that a looser lower bound than (\ref{entrate1}) akin to
(\ref{entrate2}) can be obtained as
\beq
\dot{S} \,\geq\, \frac{1}{2}  {\rm trace}\left[D_0(t) F(t) \right]
\label{d0def}
\eeq
by constructing positive definite matrix $D_0(t) = D_0^T(t)$ such that the following matrix inequality
$$ D_0(t) \,\leq\, B(\xx,t)B^T(\xx,t) \,\,\,\forall\,\,\xx \in \mathbb{R}^d $$
is satisfied. The reason why (\ref{d0def}) then holds is because the trace is linear and the trace of the product of positive semi-definite matrices is nonnegative, and both $F$ and $BB^T - D_0$ are positive semi-definite.

The condition (\ref{constdrift}) and result (\ref{entrate2}) are for diffusion processes with constant diffusion tensor and drift. Given initial conditions
$f(\xx,0) = f_0(\xx)$, the solution $f(\xx,t)$ in this case will be of the form
\beq
f(\xx,t) \,=\, (f_0 * \rho_{t{\bf a}, tBB^T})(\xx)
\label{convinit}
\eeq
where $\rho_{\bfmu, \Sigma}(\xx)$ is a multivariate Gaussian with mean $\bfmu$ and covariance $\Sigma$.
The convolution of any two functions $f_1,f_2 \in (L^1 \cap L^2)(\IR^d)$ is defined as
\beq
(f_1 * f_2)(\xx) \,\doteq\, \int_{\IR^d} f_1(\yy) f_2(\xx - \yy) \, d\yy\,.
\label{convrd}
\eeq

Fisher information plays an important part in probability theory \cite{Fisherref,Kullbackref} and
its connections to physics also have been recognized \cite{st_jaynes,st_frieden}. For a recent review
of its properties see \cite{Zegers}.

Several inequalities from Information Theory can then be used to bound both entropy and entropy rate by quantities that are easily computable. For example, it is known that \cite{stam,blachman,dembo}
\beq
\frac{1}{{\rm tr}[F(f_1*f_2) P]} \,\geq\,
\frac{1}{{\rm tr}[F(f_1) P]} + \frac{1}{{\rm tr}[F(f_2) P]}
\label{fishrffftight}
\eeq
where $P$ is any real positive definite symmetric matrix with the same dimensions as $F$.
When $P=D$ this can then be used to give a lower bound on ${\rm tr}[FD]$, and hence on $\dot{S}$.
Moreover, one reason for the significance of (\ref{fishrffftight}) in information theory is that
it provides a path for proving the Entropy Power Inequality \cite{stam,blachman} described below.

The entropy power of a pdf $f(\xx)$ on $\IR^d$ was defined by Shannon
as \cite{Shannon,Cover}
$$ N(f) \,\doteq\, \frac{\exp(2 S(f)/d)}{2\pi e} $$
where $S(f)$ denotes the entropy of $f$.
The entropy power inequality is
\beq
N(f_1*f_2) \,\geq\, N(f_1) + N(f_2) \,.
\label{entpowine932}
\eeq
Since the logarithm is a strictly increasing function, this provides a lower bound on $S(f_1*f_2)$ and hence
can be used to bound the entropy of $f(\xx,t)$ of the form in (\ref{convinit}). In Section \ref{boundingcovsec}
lower bounds on $\dot{S}$ will be derived.

It should be noted that (\ref{fishrffftight}) and (\ref{entpowine932}) only apply for convolution on Euclidean spaces, and do not even apply for diffusion processes on the circle. In contrast, other bounds on non-Euclidean spaces and for processes that are not necessarily homogenoues diffusions are presented later in this paper.

\subsection{Examples}

\subsubsection{Brownian Motion in Euclidean Spaces}

Brownian motion in $d$-dimensional Euclidean space with Dirac delta initial conditions
was already reviewed, with pdf (\ref{Gaussian}) and
entropy (\ref{gaussianentropy}). From this, the rate of entropy production can be computed explicitly as
$$ \dot{S} \,=\,
\frac{(d/2)(2\pi e)^{d/2} |D|^{1/2} t^{d/2 -1} }{(2\pi e)^{d/2} |D|^{1/2} t^{d/2}} = \frac{d}{2t} \,. $$

The version of the Fisher information matrix in the theorem for a Gaussian is the inverse of the covariance, and hence
$$ F \,=\, (Dt)^{-1} = t^{-1} D^{-1}  \,. $$
Consequently (\ref{entrate2}) gives
$$ \dot{S} \,=\, \frac{1}{2} {\rm trace} (t^{-1} \II) = \frac{d}{2t} \,. $$

\subsubsection{Brownian Motion on the Torus/Circle}

The stochatic differential equation
$$ d\theta = \sqrt{D} dw $$
with constant scalar $D$
describing Brownian motion on the unit circle has an associated Fokker-Planck equation
$$ \frac{\partial f}{\partial t} = \half D \frac{\partial^2 f}{\partial \theta^2} $$
which is the same as the case on the line. However, the boundary condition $\theta(\pi) = \theta(-\pi)$
is imposed rather than free boundary conditions.

Let
$$ \rho(x,t) \,\doteq\, \frac{1}{\sqrt{2\pi Dt}} e^{-\,\frac{x^2}{2Dt}} \,, $$
which is the solution to the heat equation on the real line subject to initial conditions
$\rho(x,0) = \delta(x)$.

The solution to the heat equation on the circle subject to initial condition $f(\theta,0) =\delta(\theta)$
is then \cite{stochastic,harmonic}
\beq
f(\theta,t) \,=\, \sum_{k=-\infty}^{\infty} \rho(\theta - 2\pi k; t) = \frac{1}{2\pi} + \frac{1}{\pi}  \sum_{n=1}^{\infty}
e^{-Dt n^2/2} \cos n\theta
\label{heatcirc8939}
\eeq
The above two equalities represent two very different ways of describing the solution. In the first equality,
the Gaussian solution presented in the previous section is wrapped (or folded) around the circle.
The second solution is a Fourier series solution. The first is efficient when $kt$ is small. In such cases only $k=0$ may be sufficient as an approximation. When $kt$ is large, truncating $n=1$ in the Fourier expansion may be
sufficient.

Statistical quantities such as the mean and variance can be computed in closed form as
$$ \mu(t) \,=\, \int_{-\pi}^{\pi} \theta f(\theta,t) d\theta \,=\,  0 $$
and
$$ \sigma^{2}(t) \,=\, \int_{-\pi}^{\pi} \theta^2 f(\theta,t) d\theta \,=\,  \frac{\pi^2}{3} \,+\, 4 \sum_{n=1}^{\infty} \frac{(-1)^n}{n^2} e^{-Dt n^2/2} \,. $$

In both expressions in (\ref{heatcirc8939}) summations are present which makes the exact analytical computation of logarithms, and hence entropy, problematic.  However, Since the function $\Phi(x) = -\log x$ is a monotonically decreasing function,
then when $a,b >0$ we have $\Phi(a+b) < \Phi(a)$. Consequently, since $\rho >0$ everywhere,
\beq
S(t) \,\leq\, -\int_{-\pi}^{\pi} f(\theta,t) \log \rho(\theta,t) d\theta
= \frac{1}{2} \log(2\pi Dt) + \frac{\sigma^2(t)}{2Dt} \,.
\label{lbentcirc}
\eeq
In the extreme case when the distribution reaches equilibrium $f_{\infty}(\theta) = 1/(2\pi)$, this is the maximum entropy possible, and hence
\beq
S(t) \,\leq\,  S_{\infty} = \log(2\pi) \,.
\label{ubentcirc}
\eeq

It should be noted that all calculations here are done relative to the measure $d\theta$.
For a compact space like the circle, it is common to normalize the measure so that
$$ V = \int_{V} \,1\, dV = 1. $$
In doing so for the heat kernel on the circle
this would involve redefining $dV = d\theta/2\pi$ and $f'(\theta,t) \doteq 2\pi f(\theta,t)$.
This has no effect on mean and covariance, but the value of entropy is shifted such that the entropy of
the uniform distribution is equal to zero and the entropy of all other distributions are negative. Rewriting (\ref{ubentcirc}) in a way that does
not depend on the choice of normalization of measure is
\beq
S(t) \,\leq\, \log V\left(\mathbb{S}^1\right) \,.
\label{xxdfentcirc}
\eeq
In other words, the value of entropy not only depends on the base of the logarithm, but also on the way that the
integration measure is scaled.
%

\subsubsection{Concentration of Species Transport in Inhomogeneous Compressible Flow} \label{compressible}

The concentration $c(x,t)$ of a species in inhomogeneous compressible flow can be modeled by the
equation \cite{fluids1,fluids2,fluids3}
\beq
\frac{\partial c}{\partial t} \,=\, \frac{\partial}{\partial x} \left[D(x,t) \frac{\partial c}{\partial x}\right] - \frac{\partial}{\partial x}\left(u(x,t) c\right)
\label{compeq}
\eeq
where $D(x,t) = D_0(1+\kappa_0 x)^2$ and $u(x,t) = u_0(1+\kappa_0 x)$. This is a FPE, and it is possible to work backwards to find the corresponding SDE.

We see immediately from the above pde that if $c(x,0)$ is normalized to be a probability density function,
then $c(x,t)$ will preserve this property.

Equation (\ref{compeq}) represents a one-dimensional example of what was presented in the theorem, and the rate of entropy increase
is
$$ \dot{S} \,=\, \int_{-\infty}^{\infty} \frac{1}{c} D(x,t) \left(\frac{\partial c}{\partial x}\right)^2 dx
+ u_0 \kappa_0 $$
since the drift term in the entropy rate computation simplifies as
$$ - \int_{-\infty}^{\infty} u \frac{\partial c}{\partial x} dx \,=\,
\int_{-\infty}^{\infty} c \frac{\partial u}{\partial x} dx = u_0 \kappa_0 \int_{-\infty}^{\infty} c(x,t) dx = u_0 \kappa_0\,. $$

\subsubsection{Homogeneous Transport in Couette Flow}

As another example from classical fluid mechanics, consider 2D homogeneous transport in Couette flow governed
by the equation \cite{sun,fluids4}
$$ \frac{\partial c}{\partial t} \,=\, D_0 \left[\frac{\partial^2 c}{\partial x^2}
+ \frac{\partial^2 c}{\partial y^2} \right] - \frac{U_0}{H} y \frac{\partial c}{\partial x} $$
where $y \in [0,H]$ and again $c(\xx,0)$ is normalized to be a probability density function, and hence
$c(\xx,t)$ retains this property. In this case the region over which the equation holds is an infinite slab with the concentration and its gradient vanishing at $y=0$ and $y=H$.

We then arrive at
$$ \dot{S} \,=\, D_0 \int_{0}^{H} \int_{-\infty}^{\infty} \frac{1}{c} \left[\left(\frac{\partial c}{\partial x}\right)^2 + \left(\frac{\partial c}{\partial y}\right)^2\right] dx dy
+ {\bfmu} \cdot {\bf e}_2 $$
where
$$ \bfmu \,=\,  \int_{0}^{H} \int_{-\infty}^{\infty} \xx \, c(\xx,t) d\xx $$
is the mean of $c(\xx,t)$.

The pdf for concentration in both of the above examples can be solved in closed form using the methods
in \cite{sun} in which inhomogeneous processes on Euclidean space are recast as homogeneous processes on an
appropriately chosen Lie group. But the purpose of these examples is to illustrate the relationship between
entropy rate and Fisher information. In the next section, a classical result of Information Theory is used
to bound the Fisher information with covariance. Covariance can be propagated without explicit knowledge of the
pdf, which will be demonstrated.

\section{Bounding Rate of Entropy Production with Covariance} \label{boundingcovsec}

The version of the Fisher information matrix that appears in entropy rate computations
in general is not easy to compute. An exception to this statement is when the pdf is a Gaussian
and
$$ F_{gauss} \,=\, \Sigma_{gauss}^{-1} \,. $$
For other exponential families closed form expressions are also possible.
But in general computing the time-varying Fisher information matrix for
a pdf satisfying a Fokker-Planck equation will not be easy to compute.

But it is possible to bound the Fisher information matrix with the covariance of the pdf, and
covariance can be propagated as an ordinary differential equation even when no explicit solution
for the time-varying pdf is known.

In this section the
Cram\'{e}r-Rao Bound \cite{cramerref,Rao}, a famous inequality in Euclidean statistics and information theory, is reviewed. Recently it has been extended to manifolds and Lie groups \cite{infotaxis,bonnabel1,bonnabel2,solo1,solo2,infomatr}. A second kind of inequality for compact
spaces such as tori, spheres, and rotation groups was introduced in \cite{wirtinger}. Then examples where
covariance is propagated directly from the FPE are given in which the aforementioned inequalities can be
put to use in bounding the rate of entropy generation. An outline of the general procedure is given here.

Let $\bfmu$ and $\Sigma$ denote the mean of a pdf $f(\xx;\bfmu, \Sigma)$. Then
$$ \bfmu \,\doteq\, \int_{\IR^n} \xx \, f(\xx;\bfmu,\Sigma) \, d\xx \,,$$
or equivalently
\beq
\int_{\IR^n} (\xx - \bfmu) \, f(\xx;\bfmu,\Sigma) \, d\xx \,=\, 0 \,,
\label{3j3j32}
\eeq
and
$$ \Sigma \,\doteq\, \int_{\IR^n} (\xx - \bfmu) (\xx - \bfmu)^T \, f(\xx;\bfmu,\Sigma) \, d\xx \,. $$

When presented with a FPE of the form
$$ \frac{\partial f}{\partial t} \,=\, {\cal D} f \,, $$
where ${\cal D}$ is as in the right hand side of (\ref{fpstrateqgqwow}),
an ordinary differential equation describing the evolution of $\bfmu(t)$ and $\Sigma(t)$ can be defined
as
$$ \dot{\bfmu} = \int_{\IR^n} \xx \, {\cal D} f \, d\xx $$
and
$$ \dot{\Sigma} = \int_{\IR^n} (\xx - \bfmu) (\xx - \bfmu)^T {\cal D} f \, d\xx $$
where integration by parts can be used in some cases to obtain explicit closed-form expressions for
the integrals on the right hand side of both equations. Once $\Sigma(t)$ is obtained, it can be used
to bound entropy from above using (\ref{gaussianentropy}) the rate of entropy production from below
using the results given in the next section.

\subsection{The Cram\'{e}r-Rao Bound}

The Cram\'{e}r-Rao Bound (or CRB) is a way to bound the covariance of an estimated statistical quantity \cite{cramerref,Rao}. Here it will not be used in its most general form. Here it will only be used in the unbiased estimation of the
mean of a probability density function on $\IR^n$. .

In the standard derivation of the CRB, as given in \cite{jljlopt,Cover} for the case of estimation of the mean,
the gradient of this expression with respect to $\bfmu$ is computed, where
where ${\partial f}/{\partial \bfmu}$ denotes the gradient as a column vector, and  ${\partial f}/{\partial \bfmu^T} \doteq [{\partial f}/{\partial \bfmu}]^T$.

Differentiation of both sides of  (\ref{3j3j32}) with respect to $\bfmu^T$ gives
$$ \frac{\partial}{\partial \bfmu^T} \int_{\IR^n} [{\bf x} - \bfmu] \,f\, d\xx =
\int_{\IR^n} [{\bf x} - \bfmu] \frac{\partial \,f\,}{\partial \bfmu^T} d\xx - \II = \OO\,. $$
Here the derivative is taken under the integral. The product rule
for differentiation then is used with the fact that $f$ is a pdf in $\xx$.
$\OO$ denotes the $m \times m$ zero matrix resulting from computing
${\partial {\bf 0}}/{\partial \bfmu^T}$.

The above equation can be written as \cite{cramerref,Cover,Rao}
\beq
\II = \int_{\IR^n} {\bf a}(\xx,\bfmu) {\bf b}^T(\xx,\bfmu) d\xx \in \IR^{p \times m}
\label{kbh443}
\eeq
where
$$ {\bf a}(\xx,\bfmu) = [\,f\,]^{\half} [{\bf x} - \bfmu] $$
and
$$ {\bf b}(\xx,\bfmu) = [\,f\,]^{-\half} \frac{\partial f}{\partial \bfmu} \,. $$
Using the fact that
$f(\xx;\bfmu,\Sigma) = f(\xx-\bfmu; {\bf 0},\Sigma)$
means that
$$ {\bf b}(\xx,\bfmu) = -[\,f\,]^{-\half} \frac{\partial f}{\partial \xx} \,. $$

Then it becomes clear that
\beq
 F \,=\, \int_{\IR^n} {\bf b}(\xx,\bfmu)  [{\bf b}(\xx,\bfmu)]^T d\xx
 \label{fdef0033}
 \eeq
and
\beq
\Sigma \,=\,
 \int_{\IR^n} {\bf a}(\xx,\bfmu) [{\bf a}(\xx,\bfmu)]^T d\xx.
 \label{cdef0033}
 \eeq

 Following the logic in \cite{jljlopt}, two arbitrary vectors are introduced: ${\bf v} \in \IR^{p}$ and ${\bf w} \in \IR^m$. Then (\ref{kbh443}) is multiplied on the left by ${\bf v}^T$
  and on the right by ${\bf w}$ to give
 \beq
 {\bf v}^T \II \, {\bf w} = \int_{\IR^n} {\bf v}^T {\bf a}(\xx,\bfmu) {\bf b}^T(\xx,\bfmu) {\bf w} \, d\xx.
 \label{kkene}
 \eeq
Regrouping terms in the resulting expression, squaring, and using the Cauchy-Schwarz inequality, then gives
\cite{cramerref,Rao,jljlopt}
\bea
\left(\int_{\IR^n} {\bf v}^T ({\bf a} {\bf b}^T) {\bf w} \, d\xx \right)^2 &=&
\left(\int_{\IR^n} ({\bf v}^T {\bf a})({\bf b}^T {\bf w}) d\xx \right)^2 \\ \\ &\leq&
\left(\int_{\IR^n} ({\bf v}^T {\bf a})^2 d\xx \right)
\left(\int_{\IR^n} ({\bf w}^T {\bf b})^2 d\xx \right) \\ \\
&=& \left( \int_{\IR^n} {\bf v}^T {\bf a} {\bf a}^T {\bf v} \, d\xx\right)
\left(\int_{\IR^n} {\bf w}^T {\bf b}  {\bf b}^T {\bf w} \, d\xx \right).
\eea
From the equations (\ref{fdef0033}), (\ref{cdef0033})  and (\ref{kkene}), this can be written as
$$ ({\bf v}^T \II \,{\bf w})^2 \leq ({\bf v}^T \Sigma {\bf v}) ({\bf w}^T F {\bf w}). $$
Making the choice of ${\bf w} = F^{-1} {\bf v}$ yields
$$ \left({\bf v}^T \, F^{-1}  {\bf v} \right)^2 \leq ({\bf v}^T \Sigma {\bf v}) \left({\bf v}^T \, F^{-1} {\bf v} \right). $$
This simplifies to
\beq
{\bf v}^T \left(\Sigma - \, F^{-1} \right) {\bf v} \geq 0
\htab \htab {\rm for} \htab {\rm arbitrary} \htab \htab
{\bf v} \in \IR^n
\label{crll33l}
\eeq
Consequently, the term in parenthesis is a positive definite matrix, or as a matrix inequality \cite{cramerref,Cover,Rao,jljlopt}

\beq
\Sigma \geq  \, F^{-1} \,,
\label{cr3kn32}
\eeq
which is the famous Cram\'{e}r-Rao Bound (for the special case of an unbiased estimator of the mean).
This is equivalently
\beq
\Sigma^{-1} \leq  \, F \,.
\label{cr3kn32a}
\eeq
Then, for example, in all of the equations for entropy production in time-varying pdfs on Euclidean space
presented earlier, it is possible to bound from below using a cascade of inequalities such as
$$ {\rm tr}[D F] \,\geq\, \lambda_{min}(D) \, {\rm tr}[F] \,\geq\,  \lambda_{min}(D) \, {\rm tr}\left[\Sigma^{-1}\right]\,. $$

\subsection{An Example}

Returning to the example of species transport in a compressible 1D flow outlined in Section \ref{compressible},
this section illustrates how the entropy rate can be bounded from below using the CRB even when a closed-form solution for the pdf is not known.

From the FPE itself, it is possible to propagate the mean and covariance. Multiplying both sides of (\ref{compeq}) and integrating by parts gives the following ordinary differential equation (ODE) for the mean $\mu(t)$
$$ \dot{\mu} \,=\, (2d_0\kappa_0 + u_0) (1 + \kappa_0 \mu) $$
subject to initial conditions, $\mu(0) = \mu_0$. This ODF can be solved in closed form for $\mu(t)$. However,
even if it could not be, it could be solved by numerical integration, which is much easier than solving the FPE. Similarly, since
$$ \sigma^2 =  \int_{-\infty}^{\infty} (x-\mu)^2 c(x,t) dx
= -\mu^2 + \int_{-\infty}^{\infty} x^2 c(x,t) dx \,, $$
multiplying (\ref{compeq}) by $x^2$ and integrating by parts
gives a way to propagate the covariance with an ODE of the form
$$ \frac{d}{dt}(\sigma^2) = F(\mu,\sigma^2) $$
which can be solved either analytically or numerically subject to initial conditions
$\sigma^2(0) = \sigma_0^2$.

It is worth noting that even in cases where such propagation of moments by FPE is not possible
(for example, when higher moments creep into the equations so that there is not form closure),
it is still possible to numerically generate a large ensemble of sample paths from the SDE corresponding to the FPE and compute variance (or covariance in multi-dof systems). Covariance estimation is much more stable than pdf
estimation, and so using the CRB as a lower bound is more reliable than directly attempting to compute
entropy, entropy rate, or Fisher information when the pdf is not known explicitly.

\section{Classical Statistical Mechanics as Stochastic Mechanics}

Classical Statistical Mechanics, as developed by Boltzmann and Gibbs, states that entropy increases.
For an introduction to phase space and equilibrium statistical mechanics see \cite{st_GibbsJW}.
Nonequilibrium statistical mechanics has been studied extensively over a long period of time starting with
Boltzmann and summarized in a number of books including
\cite{st_Prigoginefd, st_McLennan}. Important results continue to be developed in modern time, e.g., \cite{Jarzynski}. An alternative to the classical Boltzmann-Gibbs formulation is stochastic mechanics \cite{st_Kacfd,st_Bismut,st_nelson,nelson}. Here a Hamiltonian formulation of stochastic mechanics is used.

The Hamiltonian of a mechanical system is defined as the total system energy written in terms the conjugate momenta $\pp$ and generalized coordinates $\qq$:
\beq
 H(\pp,\qq) \doteq \half \pp^T M^{-1}(\qq)\, \pp \, + \, V(\qq) \,.
\label{hamlsll4432}
\eeq
Here $M(\qq)$ is the configuration-dependent mass matrix and
$${\bf p} \,\doteq\, M(\qq) \dot{\qq} \,. $$
The beauty of the Hamiltonian formulation is
that the volume in phase space (i.e., the joint $\pp$-$\qq$ space) is invariant under coordinate changes.

\subsection{Properties of Phase Space}

As is well known and explained in \cite{stochastic},
if ${\bf q}$ and ${\bf q}'$ are two different sets of coordinates, then
$$ T \,=\, \half \dot{{\bf q}}'^T M'({\bf q}') \dot{{\bf q}}' = \half \dot{{\bf q}}^T M({\bf q}) \dot{{\bf q}}. $$
Then with the Jaccobian relating rates of change,
$$ \dot{\bf q}' \,=\,  J({\bf q}) \, \dot{\bf q}, $$
it is clear that
$$ M({\bf q}) \,=\, J^T({\bf q}) M'({\bf q}'({\bf q})) J({\bf q}) \,. $$

From the above, and the definition of conjugate momenta,
\beq
{\bf p}' = J^{-T}({\bf q}) \, {\bf p}.
\label{jteq}
\eeq
Therefore, the two phase spaces have volume elements that are related as
$$  d{\bf p}' \,d{\bf q}' \,=\, \left|\begin{array}{cc}
J^{-T}({\bf q}) & \partial (J^{-T}({\bf q}) {\bf p})/\partial {\bf q} \\ \\
\mathbb{O} & J({\bf q})
\end{array}\right| \,d{\bf p} \,d{\bf q}. $$
The determinant of the upper-triangular block matrix in the above equation is equal to $1$, illustrating the invariance
\beq
d{\bf p} \,d{\bf q} \,=\, d{\bf p}' \,d{\bf q}'\,.
\label{invarphase}
\eeq
The key to this result is how $\pp$ transforms in (\ref{jteq}). A similar result holds in the Lie group
setting wherein the cotangent bundle of a Lie group can be endowed with an operation making it unimodular\footnote{A unimodular Lie group is one with an integration measure that is invariant under shifts from the left and the right.} even when the underlying group is not \cite{amitesh}. This is analogous to why (\ref{sqdef}) requires the metric tensor weighting and is coordinate dependent and (\ref{invarphase}) is not.

\subsection{Hamilton's Equations for a System Forced by External Noise and Damping}

Hamilton's equations of motion are
\beaq
\frac{d p_i}{dt} &\,=\,& - \frac{\partial H}{\partial q_i} + F_i \\
\frac{d q_i}{dt} &\,=\,& \quad \frac{\partial H}{\partial p_i} \,.
\label{hameqmoteq}
\eeaq
where $F_i$ are generalized external forces. In the case in which the mechanical system is forced by noise and viscous damping, then after multiplication by $dt$, these equations of motion become
\beaq
dp_i \,=\, -\half \pp^T \frac{\partial M}{\partial q_i} \pp \,dt \,-\, \frac{\partial V}{\partial {q}_i} \,dt\, -\,
{\bf e}_i^T CM^{-1} \pp \,dt \,+\, {\bf e}_i^T B d{\bf w}
\label{hffsgf1}
\eeaq
and
\beq
dq_i \,=\, {\bf e}_i^T M^{-1} \pp \,dt
\label{hffsgf2}
\eeq
where ${\bf e}_i$ is the $i^{th}$ natural unit basis vector. Note that the configuration-dependant
mass matrix $M = M(\qq)$, noise matrix $B = B(\qq)$, and damping $C = C(\qq)$ appear prominently in these equations.

Equations (\ref{hffsgf1}) and (\ref{hffsgf2}) can be written together as
\beq
\left(\begin{array}{c}
d\qq \\
d\pp \end{array} \right) = \left(\begin{array}{c}
\bfalpha(\pp,\qq) \\
\bfgamma(\pp,\qq) \end{array} \right) \,dt + \left(\begin{array}{cc}
\OO & \OO \\
\OO & B(\qq) \end{array} \right)  \left(\begin{array}{c}
d\ww' \\
d\ww \end{array} \right) \,.
\label{ksdknlwo33er}
\eeq
(Here $d\ww'$ multiplies zeros and hence is inconsequential).

The vector-valued function $\bfalpha$ and $\bfgamma$ are defined by their entries
\bea
\alpha_i &\doteq& {\bf e}_i^T M^{-1} \pp \\
\gamma_i &\doteq& -\half \pp^T \frac{\partial M}{\partial q_i} \pp - \frac{\partial V}{\partial {q}_i} -
{\bf e}_i^T CM^{-1} \pp .
\eea

The Fokker-Planck corresponding to (\ref{ksdknlwo33er}), which together with an initial distribution $f(\qq,\pp,0) = f_0(\qq,\pp)$ defines the family of time-evolving
pdfs $f(\qq,\pp; t)$, is
\beq
\frac{\partial f}{\partial t} + \sum_{i=1}^{n} \frac{\partial}{\partial q_i} \left(\alpha_i f\right) +
\sum_{i=1}^{n} \frac{\partial}{\partial p_i} \left(\gamma_i f\right) - \half \sum_{k=1}^{n}  \sum_{i,j=1}^{n} \frac{\partial^2}{\partial p_i \partial p_j}
\left(b_{ik} b_{kj}^T f\right) = 0
\label{kskdkf3333er}
\eeq
where $b_{ij} = {\bf e}_i^T B {\bf e}_j$ is the $i,j^{th}$ entry of $B$.

Note that for any mechanical system with inertia the diffusion is the same regardless of \Ito or Stratonovich interepretation, as
$$ \frac{\partial^2}{\partial p_i \partial p_j}
\left(b_{ik} b_{kj}^T f\right) = \frac{\partial}{\partial p_i}
\left(b_{ik} \frac{\partial}{\partial p_j} (b_{kj}^T f)\right)  =
b_{ik} b_{kj}^T \frac{\partial^2 f}{\partial p_i \partial p_j} \,. $$
That is, even though $B$ is configuration dependent, the structure of the FPE equations in the case
of mechanical systems with inertia places partial derivatives with respect to momenta in the diffusion terms,
and such partial derivatives pass through the configuration-dependent $B$ matrix. For this reason in
mechanical systems with inertia, it does not matter if \Ito or Stratonovich interpretations of SDEs are used.
This freedom allows the modeler to take the best of both worlds. But when approximations are made in modeling
the initial equations of motion, such as assuming that the inertia is negligible, then the above Hamiltonian formulation no longer applies and one must be very careful
as to the interpretation of the SDE as \Ito or Stratonovich.

\subsection{The Boltzmann Distribution}

The Boltzmann distribution is defined as
\beq
f_{\infty}(\qq,\pp) \,\doteq\, \frac{1}{Z} \exp \left(-\beta {H}({\bf p},{\bf q}) \right)
\label{maxboltdist}
\eeq
where $\beta \doteq 1/k_B T$ with $k_B$ denoting Boltzmann's constant and $T$ is temperature measured in degrees Kelvin.

The {\it partition function} is defined as
\begin{equation}
Z = \int_{{\bf q}} \int_{{\bf p}} \exp \left(-\beta \,{H}({\bf p},{\bf q}) \right) \,
d{\bf p} \, d{\bf q}.
\label{partition}
\end{equation}

The reason for using the subscript $\infty$ in defining (\ref{maxboltdist}) is
the following theorem.

\begin{theorem}
If $\qq \in \IR^n$ globally parameterizes the configuration manifold of a mechanical system, then the solution to the Fokker-Planck equation (\ref{kskdkf3333er}) will satisfy
$$ \lim_{t\rightarrow\infty} f(\pp,\qq,t) \,=\, f_{\infty}(\pp,\qq) $$
if and only if
\beq
C = \frac{\beta}{2} B B^T.
\label{eqdist332}
\eeq
\end{theorem}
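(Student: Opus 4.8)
The plan is to prove this equivalence by identifying $f_\infty$ as \emph{the} stationary solution of the Fokker--Planck equation (\ref{kskdkf3333er}) and then invoking convergence of the dynamics to that stationary state. The key structural fact I would exploit is that the drift in (\ref{kskdkf3333er}) separates into a conservative (Hamiltonian) part and a dissipative part: since $\alpha_i = \partial H/\partial p_i$ and $\gamma_i$ is the sum of the conservative force $-\partial H/\partial q_i$ and the damping $-(CM^{-1}\pp)_i$, the terms not involving $C$ or $B$ reassemble into the Liouville operator, while the damping drift together with the momentum diffusion $-\half\sum_{i,j}(BB^T)_{ij}\,\partial^2_{p_i p_j}$ form the irreversible part. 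Thus the FPE generator applied to any candidate density splits as $\{H,f\} + \mathcal{L}_{\rm diss}f$, where $\{\cdot,\cdot\}$ is the Poisson bracket, and stationarity of $f_\infty$ means this expression vanishes.

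First I would show that the conservative part annihilates $f_\infty$ with \emph{no} condition on $C$ or $B$. Because $f_\infty = Z^{-1}e^{-\beta H}$ defined in (\ref{maxboltdist}) depends on the phase-space point only through $H$, and the conservative terms reassemble into the Poisson bracket, we get $\{H,g(H)\} = g'(H)\{H,H\} = 0$. Hence stationarity of $f_\infty$ is equivalent to the vanishing of $\mathcal{L}_{\rm diss}f_\infty$ alone.

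Next I would evaluate $\mathcal{L}_{\rm diss}f_\infty$ directly. Writing $\uu \doteq M^{-1}\pp$ and $D \doteq BB^T$, and using $\partial f_\infty/\partial p_i = -\beta\,\uu_i\,f_\infty$, a routine computation of $\sum_i \partial_{p_i}\!\left(-(C\uu)_i f_\infty\right) - \half\sum_{i,j}D_{ij}\,\partial^2_{p_i p_j}f_\infty$ gives, after dividing by the strictly positive $f_\infty$, a quadratic polynomial in $\pp$ whose $\pp$-independent part is $-{\rm tr}(CM^{-1}) + \frac{\beta}{2}{\rm tr}(DM^{-1})$ and whose quadratic part is $\uu^T(\beta C - \frac{\beta^2}{2}D)\uu$. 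Requiring this to vanish for every $\pp$ (equivalently every $\uu \in \IR^n$, since $M$ is invertible) forces the symmetric part of $C$ to equal $\frac{\beta}{2}D$; the trace condition then follows automatically. With $C$ taken symmetric this is precisely $C = \frac{\beta}{2}BB^T$, establishing that $f_\infty$ solves the stationary FPE if and only if (\ref{eqdist332}) holds.

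Finally I would close both implications. The ($\Rightarrow$) direction is immediate: if $f(\pp,\qq,t)\to f_\infty$ as $t\to\infty$, the limit is a time-independent solution of (\ref{kskdkf3333er}), so $f_\infty$ is stationary and the computation above forces (\ref{eqdist332}). For ($\Leftarrow$), (\ref{eqdist332}) makes $f_\infty$ a stationary density, and one must still show it is the \emph{unique} invariant density to which every solution converges. This convergence is where I expect the main difficulty to lie: it is not algebraic and requires genuine analytic input, since the diffusion acts only in the momentum directions and is degenerate. I would supply it by verifying a H\"ormander bracket (hypoellipticity) condition — the damping couples $\pp$ to $\qq$ through Hamilton's equations — to obtain smoothing and uniqueness of the invariant measure, together with a free-energy/relative-entropy Lyapunov functional $\int f\log(f/f_\infty)$ that is monotonically nonincreasing along the flow and vanishes only at $f_\infty$, to drive the long-time limit. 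In the writeup I would give the stationarity equivalence in full detail and then invoke the standard ergodic theory for kinetic (Langevin) Fokker--Planck equations for the convergence step, as that is where the real work resides rather than in the mechanical verification.
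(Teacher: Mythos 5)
Your proposal is correct, and its computational core is the same as the paper's: the paper also verifies stationarity of $f_{\infty}$ by substituting it into the time-independent Fokker--Planck equation, observes ``significant cancelation'' (which is exactly your Poisson-bracket annihilation $\{H, g(H)\}=0$, just not named as such), and is left with a momentum-independent trace term ${\rm tr}\bigl((C-\tfrac{\beta}{2}BB^T)M^{-1}\bigr)$ plus a quadratic form $\bfalpha^T(C-\tfrac{\beta}{2}BB^T)\bfalpha$ that must vanish for all momenta; like you, it then uses symmetry of the damping matrix and of $BB^T$ to exclude the skew-symmetric ambiguity and conclude $C=\tfrac{\beta}{2}BB^T$. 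Where you genuinely depart from the paper is in the final step: the paper's proof ends with the stationarity equivalence and silently identifies ``$f_{\infty}$ is an equilibrium solution'' with ``$f(\pp,\qq,t)\rightarrow f_{\infty}$ as $t\rightarrow\infty$.'' You are right that these are not the same statement, and that the ($\Leftarrow$) direction of the theorem as worded requires uniqueness of the invariant density and actual convergence of the flow --- nontrivial because the diffusion acts only in the momentum directions and is degenerate on phase space. Your plan to supply this via a H\"{o}rmander hypoellipticity check plus a relative-entropy Lyapunov functional (standard ergodic theory for kinetic Langevin/Fokker--Planck equations) is exactly the missing analytic input; the paper does not address it at all. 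So your route costs more machinery but proves the theorem as stated, whereas the paper's shorter argument proves only the fixed-point characterization, leaving the long-time limit implicit.
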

\begin{proof}
We begin by noting that (\ref{kskdkf3333er}) can be simplified a bit. First,
$$ \frac{\partial}{\partial q_i} \left(\alpha_i f\right) \,=\,
\frac{\partial \alpha_i}{\partial q_i} f +  \alpha_i \frac{\partial f}{\partial q_i}
$$
and
$$ \frac{\partial}{\partial p_i} \left(\gamma_i f\right) \,=\,
\frac{\partial \gamma_i}{\partial p_i} f + \gamma_i \frac{\partial f}{\partial p_i} \,.
$$
It is not difficult to show that
$$  \sum_{i=1}^{n}\left\{
\frac{\partial \alpha_i}{\partial q_i} \,+\, \frac{\partial \gamma_i}{\partial p_i} \right\} \,=\,
{\rm tr}\left(C M^{-1}\right) \,. $$
Using this, and considering the equilibrium condition when ${\partial f}/{\partial t} = 0$ then reduces
(\ref{kskdkf3333er}) to
\beq {\rm tr}\left(C M^{-1}\right) f +
\sum_{i=1}^{n} \left\{\alpha_i \frac{\partial f}{\partial q_i} \,+\,
\gamma_i \frac{\partial f}{\partial p_i}\right\}- \half \sum_{i,j=1}^{n} \frac{\partial^2}{\partial p_i \partial p_j} \left((BB^T)_{ij} f\right) = 0 \,,
\label{kskdkf3333erdd}
\eeq
where the substitution
$$ (BB^T)_{ij} \,=\, \sum_{k=1}^{n} b_{ik} b_{kj}^T $$
has been made.

Note that
$$ \frac{\partial f_{\infty}}{\partial q_i} \,=\, - \beta \left(\frac{\partial V}{\partial {q}_i}
+ \frac{1}{2} \pp^T \frac{\partial M}{\partial q_i} \pp
\right) f_{\infty} \,,$$
$$ \frac{\partial f_{\infty}}{\partial p_i} \,=\, - \beta {\bf e}_i^T M^{-1} {\bf p} \, f_{\infty}
\,=\, - \beta \alpha_i \, f_{\infty} \,,$$
and hence significant cancelation results in
$$
\sum_{i=1}^{n} \left\{\alpha_i \frac{\partial f_{\infty}}{\partial q_i} \,+\,
\gamma_i \frac{\partial f_{\infty}}{\partial p_i}\right\} \,=\, \beta \bfalpha^T C \bfalpha \,.
$$
Moreover,
$$ \frac{\partial^2 f_{\infty}}{\partial p_i \partial p_i} \,=\,
\left(- \beta m_{ij}^{-1} + \beta^2 \alpha_i \alpha_j
\right) \, f_{\infty} $$
Substituting into (\ref{kskdkf3333erdd})
therefore gives
$$ {\rm tr}\left(C M^{-1}\right) \,+\, \beta \bfalpha^T C \bfalpha \,-\, \frac{\beta}{2} {\rm tr}\left(M^{-1} BB^T\right)
\,-\, \frac{\beta^2}{2} \bfalpha^T BB^T \bfalpha \,=\, 0. $$
This shows that $f_{\infty}(\pp,\qq)$ is in fact a solution to (\ref{kskdkf3333erdd}),
if (\ref{eqdist332}) holds. The necessary conditions for the above equality to hold boil down to the necessary
conditions for the two independent statements
$$
{\rm tr}\left((C \,-\, \frac{\beta}{2} BB^T)M^{-1}\right) \,=\, 0
$$
and
$$
\bfalpha^T (C \,-\, \frac{\beta}{2} BB^T) \bfalpha \,=\, 0
$$
to hold. The independence of these follows from the fact that some terms depend on $\bfalpha$ and others do not, and the main equality must hold for all values of $\bfalpha$.

The only way that both of the above can hold is if $C \,-\, \frac{\beta}{2} BB^T$ is skew symmetric.
But damping matrices, like stiffness and mass matrices, are symmetric, as is $BB^T$. Hence (\ref{eqdist332}) must hold.

Note that these necessary and sufficient conditions for the Boltzmann distribution to be the equilibrium solution is independent of any $\qq$-dependence of $B$ and $C$.
\end{proof}

%
%

\subsection{Marginal Densities and the Conundrum as Mass Becomes Zero}

Marginal densities of $f(\pp,\qq,t)$ can be defined as
$$ f(\pp,t) \,\doteq\, \int_{\qq} f(\pp,\qq,t) \,|\det M(\qq)|^{1/2} \,d\qq $$
and
\beq
f(\qq,t) \,\doteq\, |\det M(\qq)|^{-1/2} \int_{\pp} f(\pp,\qq,t) \,d\pp \,,
\label{marginalq}
\eeq
which is consistent with (\ref{fqdef}).

In the equilibrium case it is always possible to compute $f_{\infty}(\qq)$ in closed form
as
\beq
f_{\infty}(\qq) = \frac{1}{Z_c} e^{-\beta V(\qq)}
\label{confboltz}
\eeq
where
$$ Z_c \,=\, \int_{\qq} e^{-\beta V(\qq)} |\det M(\qq)|^{1/2} d\qq $$
is the configurational partition function. Then
$$ \int_{\qq} f_{\infty}(\qq) \, |\det M(\qq)|^{1/2} \, d\qq \,=\, 1\,. $$
In contrast, in general $f_{\infty}(\pp)$ can only be computed easily in closed form when $M(\qq) =M_0$ is constant. In this case
$$ f_{\infty}(\pp) = \frac{\beta^{d/2}}{(2\pi)^{d/2}|\det M_0|^{d/2}} \exp\left(-\frac{\beta}{2} \pp^T M_0^{-1} \pp\right) $$
is a Gaussian distribution in the momenta.

Though (\ref{confboltz}) degenerates as the inertia goes to zero, it does so gracefully since both $|\det M|^{1/2}$ and $Z_c$ approach zero in the same way as the system mass goes to zero. We can then use it as the baseline truth to compare approximations in which inertia is neglected. For example, consider the
spring-mass-damper with noise
$$ m \ddot{x} + c(x) \dot{x} + kx = b(x) n $$
where $c(x)$ and $b(x)$ are nonlinear functions satisfying the condition $2 c(x) = \beta b(x)^2$.
If $ndt = dw$, then as $m \rightarrow 0$, we have a conundrum unless $c=c_0$ and $b=b_0$ are constant. Namely, which of the following interpretations is correct ?
$$ d{x}_1 = - k\, c(x_1)^{-1} \,x_1 dt \,+\, 2\beta^{-1} b(x_1)^{-1} \, dw $$
or
$$ d{x}_2 = - k\, c(x_2)^{-1} \,x_2 dt \,+\, 2\beta^{-1} b(x_2)^{-1} \,\circledS\, dw \,? $$
It didn't matter when there was inertia, as both gave the same FPEs in the case,
but making the approximation that the mass is zero creates a situation
where a choice now must be made.

The answer can be informed by comparing the corresponding pdfs that solve the associated Fokker-Planck equations,
$f_1(x,t)$ and $f_2(x,t)$, with $f(x,t)$ in (\ref{marginalq}). Short of that, we can examine the behavior of the mean as a function of time, and the behavior of the equilibrium distributions as compared with (\ref{marginalq}).

The Fokker-Planck equations corresponding to the above SDEs are respectively
$$ \frac{\partial f_1}{\partial t} = k \frac{\partial}{\partial x}(c^{-1} x f_1) + 2\beta^{-2} \frac{\partial^2}{\partial x^2}(b^{-2} f_1) $$
and
$$ \frac{\partial f_2}{\partial t} = k \frac{\partial}{\partial x}(c^{-1} x f_2) + 2\beta^{-2} \frac{\partial}{\partial x}\left(b^{-1} \frac{\partial}{\partial x}(b^{-1} f_2)\right) \,. $$

Expanding and considering equilibrium conditions gives

$$ \Delta_1 = k \frac{\partial}{\partial x}(c^{-1} x f_1) + 2\beta^{-2} \frac{\partial}{\partial x}\left(-2b^{-3} \frac{\partial b}{\partial x} f_1 +
b^{-2} \frac{\partial f_1}{\partial x} \right) $$
and
$$ \Delta_2 = k \frac{\partial}{\partial x}(c^{-1} x f_2) + 2\beta^{-2} \frac{\partial}{\partial x}\left(-b^{-3}
\frac{\partial b}{\partial x} f_2 + b^{-2} \frac{\partial f_2}{\partial x} \right) $$
where an exact solution would give $\Delta_i =0$.

The exact configurational marginal from the Hamiltonian formulation is
$$ f_{\infty}(x) = \left(\frac{\beta k}{2\pi}\right)^{\half} e^{-\beta k x^2/2} \,, $$
and it has the property
$$ \frac{\partial f_{\infty}}{\partial x} \,=\, -\beta k x f_{\infty} \,. $$
Substituting into the above, and observing that
$$ k \frac{\partial}{\partial x}(c^{-1} x f_{\infty}) +  2\beta^{-2} \frac{\partial}{\partial x}\left(b^{-2} \frac{\partial f_{\infty}}{\partial x} \right) \,=\,0 $$
due to the relationship between $b$ and $c$, then
$$ \Delta_1 = 2\beta^{-2} \frac{\partial}{\partial x}\left(-b^{-3}
\frac{\partial b}{\partial x} f_{\infty}\right) \,=\, 2\Delta_2 \,. $$

This means that neither interpretation gives the true answer at equilibrium, but the magnitude of the discrepancy in the Stratonovich model is half that of the \Ito. For this reason, unless modeling systems in phase space, or if there are physical grounds for choosing a particular SDE (e.g., working backwards from Fick's Law), it the safest to consider diffusions with constant diffusion tensors, as will be the case throughout the remainder of this paper.

%
%
%
%
%

\section{Stochastic Systems on Unimodular Lie Groups}

A stochastic mechanical system that has a Lie group as its configuration space
can be studied in a coordinate-free way \cite{stochastic}.
These systems can be purely kinematic, or can have inertia.
Concrete examples are used here to illustrate, and then general theorems are provided to quantify the
rate of entropy production. Different connections between Lie groups and thermodynamics than what is presented here have been made in the literature \cite{Barbaresco1, Barbaresco2, Marle, Saxc}.

\subsection{Review of Unimodular Matrix Lie Groups with $SO(3)$ and $SE(2)$ as Examples}

The use of geometric (and particularly Lie-theoretic) methods in the control of mechanical systems and robots
has been studied extensively over the past half century \cite{15brocket73_applchap,murraylisastry_applchap,BulloLewisbook_applchap, Holm1, Holm2}.
The material and notation in this section summarizes more in-depth treatments in \cite{stochastic,harmonic}.

A matrix Lie group is a group with elements that are matrices, for which group multiplication is matrix multiplication, and for which the underlying space is an analytic manifold with the operations of group multiplication and inversion of elements being analytic also. Intuitively, matrix Lie groups are continuous
families of invertible matrices with structure that is preserved under multiplication and inversion. The dimension
of a matrix Lie group is the dimension of its manifold, not the dimension of the square matrices describing its elements.

For example, the group of rigid-body displacements in the Euclidean plane, $SE(2)$, can be described with
elements of the form
\beq
g(x,y,\theta) \,=\, \left(\begin{array}{ccc}
\cos \theta & -\sin\theta & x \\
\sin\theta & \cos\theta & y \\
0 & 0 & 1 \end{array}\right) \,.
\label{se2elem}
\eeq
The dimension is 3 because there are three free parameters $(x,y,\theta)$. This group is not compact as
$x$ and $y$ can take values on the real line.

The group of pure rotations in 3D can be described by rotation matrices
$$ SO(3) \,\doteq\, \{R \in \IR^{3\times 3} \,|\, RR^T = \II\,,\, \det R = +1\}\,. $$
$SO(3)$ is a compact 3-dimensional manifold. Again, the fact that the dimension of the matrices is also 3 is
coincidental.

A unimodular Lie group is defined by the property that a measure $dg$ can be constructed
such that the integral over the group has the property that
\beq
\int_G f(g)\, dg \,=\, \int_G f(g_0 \circ g)\, dg
\,=\, \int_G f(g \circ g_0)\, dg
\label{shiftinv}
\eeq
for any fixed $g_0 \in G$ and any function $f \in L^1(G)$.  It can also be shown that as a consequence of (\ref{shiftinv})
\beq
\int_G f(g)\, dg \,=\,  \int_G f(g^{-1})\, dg  \,.
\eeq
These properties are natural generalizations of those familiar to us for functions on Euclidean space.

As we are primarily concerned with
probability density functions for which
$$ \int_G f(g) dg = 1\,, $$
these clearly meet the condition of being in $L^1(G)$.

In the case of $SO(3)$ the bi-invariant measure expressed in terms of $Z-X-Z$ Euler angles
$(\alpha,\beta, \gamma)$ is $dR = \sin\beta d\alpha d\beta d\gamma$. In the case of $SE(2)$,
the bi-invariant measure is $dg = dx dy d\theta$.

The convolution of probability density functions on a unimodular Lie group is a natural operation,
and is defined as
\beq
(f_1 * f_2)(g) \,\doteq\, \int_{G} f_1(h) f_2(h^{-1} g) \, dh \,.
\label{convdef}
\eeq
The convolution of two probability density functions is again a probability density.

In addition to being natural spaces over which to integrate probability density functions,
natural concepts of directional derivatives of functions exist in the matrix Lie group setting.
This builds on the fact that associated with every matrix Lie group is a matrix Lie algebra.

In the case of $SO(3)$, the Lie algebra consists of $3\times 3$ skew-symmetric matrices of the form
\begin{equation}
{X} = \left(\begin{array}{ccc}
0 & -x_3 & x_2 \\
x_3 & 0 & -x_1 \\
-x_2 & x_1 & 0
\end{array} \right) = \sum_{i=1}^{3} x_i E_i.
\label{skew}
\end{equation}
The matrices $\{E_i\}$ form a basis for the set of $3\times 3$ skew-symmetric matrices.
The coefficients $\{x_i\}$ are all real. The notation relating the matrix $X$ and the vector
$\xx = [x_1,x_2,x_3]^T$ is \cite{murraylisastry_applchap,BulloLewisbook_applchap}
\beq
\xx \,=\, X^{\vee} \,\,\,{\rm and}\,\,\, X = \hat{\xx} \,.
\label{veedef}
\eeq
This is equivalent to identifying $E_i^{\vee}$ with ${\bf e}_i$.

For $SE(2)$ the basis elements are different, and are of the form
$$ E'_1 = \left(\begin{array}{ccc}
0 & 0 & 1 \\
0 & 0 & 0 \\
0 & 0 & 0 \end{array} \right); \hskip 0.2 true in
 E'_2 = \left(\begin{array}{ccc}
0 & 0 & 0 \\
0 & 0 & 1 \\
0 & 0 & 0 \end{array} \right); \hskip 0.2 true in
 E'_3 = \left(\begin{array}{ccc}
0 & -1 & 0 \\
1 & 0 & 0 \\
0 & 0 & 0 \end{array} \right) \,. $$
Every element of the Lie algebra associated with $SE(2)$ can be written as a linear combination of these, and
the notation (\ref{veedef}) is still used to identify these matrices with natural unit basis vectors
${\bf e}_i \in \IR^3$.
For example, $X' = x'_1 E'_1 + x'_2 E'_2 + x'_3  E'_3$. (Here the primes are used so as not to confuse the
Lie algebra elements for $SE(2)$ with those for $SO(3)$, but when working with a single Lie group and Lie algebra
the primes will be dropped, as in the discussion below which is for the generic case.)

For an arbitrary unimodular matrix Lie group, a natural concept of directional derivative is
\beq
(\tilde{X} f)(g) \,\doteq\, \left.\frac{d}{dt} f(g \, \exp(tX))\right|_{t=0} \,.
\label{deriv}
\eeq
Here the argument of the function $f$ is read as the product of $g$ and $\exp(tX)$, which are each in $G$, as is the product.
If $X = \sum_i x_i E_i$ for constants $\{x_i\}$, this derivative has the property
$$(\tilde{X} f)(g) = \sum_i x_i (\tilde{E}_i f)(g) \,. $$
Such derivatives appear in invariant statements of Fokker-Planck equations on unimodular Lie groups.
Moreover, these derivatives can be used together with integration to state results such as integration by parts
$$ \int_{G} f_1(g) (\tilde{E}_i f_2)(g) dg \,=\,  -\int_{G} f_2(g) (\tilde{E}_i f_1)(g) dg \,. $$
There are no surface terms because either group is infinite in its extent (and so the functions must decay to zero at the boundaries), or it is compact (in which case the functions must match values when arriving from different directions), or both for a group such as $SE(2)$.

\subsection{The Noisy Kinematic Cart}

The stochastic kinematic cart has been studied extensively in the robotics literature
\cite{Roussopoulos_applchap,thrun,wooramrobotica_applchap,zhou03_applchap,stochastic}. In this model (which is like a motor-driven wheelchair) the two wheels each have radii $r$, and the wheelbase (distance between wheels) is denoted as $L$.
The nonholonomic equations of motion are
\beq
\left(
\begin{array}{c}
\dot{x} \\ \\
\dot{y} \\ \\
\dot{\theta} \end{array}\right) \,=\,
\left(
\begin{array}{cc}
\frac{r}{2} \cos \theta & \frac{r}{2} \cos \theta \\ \\
\frac{r}{2} \sin \theta & \frac{r}{2} \sin \theta \\ \\
\frac{r}{L} & -\frac{r}{L} \end{array}\right)
\left(\begin{array}{c}
\dot{\phi}_1 \\ \\
\dot{\phi}_2 \end{array}\right) .
\label{sdecart2}
\eeq
When the wheel rates consist of a constant deterministic part and a stochastic part, then
\begin{eqnarray}
d\phi_1 \,&=&\, \omega\, dt + \sqrt{D}\, dw_1 \\
d\phi_2 \,&=&\, \omega\, dt + \sqrt{D}\, dw_2
\label{sdecart1}
\end{eqnarray}
and multiplying (\ref{sdecart2}) by $dt$ and substituting in (\ref{sdecart1}) results in an SDE.
This is an example where it does not matter whether the SDE is of It\^{o} or Stratonovich type, even though $B$ is not constant.
The corresponding Fokker-Planck equation for the probability density function $f(x,y,\theta;t)$ with respect to
measure $dx dy d\theta$ is \cite{zhou03_applchap}
$$
\frac{\partial f}{\partial t} = - r\omega \cos \theta  \frac{\partial f}{\partial x}
- r\omega \sin \theta  \frac{\partial f}{\partial y} +  $$
$$ \frac{D}{2} \left(\frac{r^2}{2} \cos^2 \theta
\frac{\partial^2 f}{\partial x^2} + \frac{r^2}{2} \sin 2\theta \frac{\partial^2 f}{\partial x \partial y}
+ \frac{r^2}{2} \sin^2 \theta \frac{\partial^2 f}{\partial y^2} + \frac{2r^2}{L^2} \frac{\partial^2 f}{\partial \theta^2}\right),
$$
which is subject to the initial conditions $f(x,y,\theta;0) = \delta(x-0)\delta(y-0)\delta(\theta-0).$

The coordinates $(x,y,\theta)$ that define the position and orientation of the cart relative to the world frame
are really parameterizing the group of rigid-body motions of the plane, $SE(2)$. Each element of this unimodular
Lie group can be described as homogeneous transformation matrices of the form in (\ref{se2elem})
in which case the group law is matrix multiplication.

Then (\ref{sdecart2}) can be written in coordinate-free notation as
\beq
\left(g^{-1} \frac{dg}{dt}\right)^{\vee} \,=\, A \, \dot{\bfphi} \,\,\,\,{\rm where}\,\,\,\,
A \,=\, \frac{r}{2}
\left(
\begin{array}{cc}
1 & 1 \\
0 & 0 \\
2/L & -2/L \end{array}\right) \,.
\label{cartcoordfree}
\eeq
Here the notation $\vee$ is used as in \cite{murraylisastry_applchap,stochastic,harmonic} in analogy with (\ref{veedef}), but for the case of $SE(2)$ rather than $SO(3)$.

The coordinate-free version of the above Fokker-Planck equation can be written compactly in terms of
these Lie derivatives as \cite{zhou03_applchap}
\beq
\frac{\partial f}{\partial t} = - r\omega \tilde{E}'_1 f + \frac{r^2 D}{4} (\tilde{E}'_1)^2 f + \frac{r^2 D}{L^2} (\tilde{E}'_3)^2 f
\label{compactfpex}
\eeq
with initial conditions $f(g;0) = \delta(g)$. The resulting time-evolving pdf is denoted as $f(g;t)$ with respect
to the natural bi-invariant integration measure for $SE(2)$, which is $dg = dx dy d\theta$.
Solutions for (\ref{compactfpex}) can be obtained in different regimes (small $Dt$ and large $Dt$) either
using Lie-group Gaussian distributions or Lie-group Fourier expansions, as in \cite{zhou03_applchap,wolfe}.
That is not the goal here. Instead, the purpose of this example is to provide a concrete case for the derivations
that follow regarding rate of entropy production.

It should be noted that degenerate diffusions on $SE(2)$ occur not only in this problem,
in models of the
visual cortex \cite{Mumford94,Williams97a,Williams97b,Zweck2004,Citti,DuitsFranken2010}.
Phase noise is a problem in coherent optical communication systems that has been identified in the literature.
\cite{com_bond89, com_foschini88b, com_foschini89, com_garrett90a}.
The Fokker-Planck equations describing phase noise have been developed and solved using various methods
\cite{com_garrett89b, com_garrett90a, com_zhang95,com_WangZhou}. Remarkably, these FPEs are the same kind as
those for the kinematic cart, inpainting, visual cortex modeling, etc. Moreover,
the natural extension of (\ref{cartcoordfree}) and (\ref{compactfpex}) to $SE(3)$ has found applications in modeling DNA (as reviewed in \cite{stochastic,harmonic}) and flexible steerable needles for robotic surgery \cite{park2005,wooramrobotica_applchap,Webster_needle}.

\subsection{Rotational Brownian Motion}

Starting with Perrin \cite{16Perrin3a}, various efforts at mathematical modeling of rotational Brownian motion has been undertaken over the past century
\cite{ 16furry,16favro,16Hubbard5a,16steele63a,16mcconnella}. These include both inertial and noninertial theories. A major application is in the spectroscopy of macromolecules \cite{Weber55,tao69}. Essentially the
same mathematics is applicable to modeling the time-evolving uncertaintly in mechanical gyroscopes
\cite{Willsky_applchap}.

Brownian motion on Riemannian manifolds and Lie groups also has been studied over a long period of time in the mathematics literature \cite{16ito,16ito1,16ito2,15mckean,stgr_Gangolli64,15duncana}, with the rotation group and three-sphere being two very popular objects \cite{16mckean,stgr_16Gorman,stgr_15Liao}. In addition to forcing by white noise, forcing by L\'{e}vy processes (white noise with jumps) has also been investigated \cite{stgr_liaobook}.

\subsubsection{Inertial Theory}

Euler's equation of motion for a rotating rigid body subjected to an external potential, noise, and damping can be written as
\beq
I_0 \,d{\bfomega} + \bfomega \times (I_0 \bfomega) \,dt = - (\tilde{\bf E} V)(R) dt - C_0  \bfomega dt
+ B_0 d{\bf w}
\label{iniertial1}
\eeq
where
$$ (\tilde{\bf E} V)(R) \,=\,\left(\begin{array}{c}
(\tilde{E}_1 V)(R) \\
(\tilde{E}_2 V)(R) \\
(\tilde{E}_3 V)(R) \end{array}\right)\,. $$
Here $\bfomega$ is the body-fixed description of angular velocity, which is related to a time-evolving rotation matrix (using the hat notation in (\ref{veedef})) as
\beq
\dot{R} = R \hat{\bfomega} \,.
\label{angmomdefdff}
\eeq
The moment of inertia matrix, $I_0$, damping matrix, $C_0$, and noise matrix
$B_0$ are all constant. Equations (\ref{iniertial1}) and (\ref{angmomdefdff}) define a stochastic process
evolving on the tangent bundle of $SO(3)$.

This can be re-written using angular momentum, $\bfpi = I_0 \bfomega$, as
\beq
d{\bfpi} \,=\, \bfpi \times (I_{0}^{-1} \bfpi) \,dt  - (\tilde{\bf E} V)(R) dt - C_0 I_{0}^{-1} \bfpi dt
+ B_0 d{\bf w} \,.
\label{iniertial2}
\eeq
Equations (\ref{iniertial2}) and
$$ \dot{R} = R \, \widehat{I_0^{-1} \bfpi} $$
define a stochastic process on the cotangent bundle of $SO(3)$.

Note that ${\bf p} \neq \bfpi$. To see this, expand angular velocity and kinetic energy in coordinates
as $\bfomega = J(\qq) \dot{\qq}$ and
$$ T = \half \dot{\qq}^T J(\qq)^T I_0 J(\qq) \dot{\qq}\,. $$
Consequently $M(\qq) = J(\qq)^T I_0 J(\qq)$ and ${\bf p} = M(\qq) \dot{\qq}$.
In contrast, $\bfpi = I_0 J(\qq) \dot{\qq}$. Therefore, in order to use the general results
from statistical mechanics, the interconversion
$$ {\bf p} \,=\, J(\qq)^T \bfpi $$
must be done. Moreover, $C_0$ in the above equations is not the same as $C$ in the Hamiltonian formulation.
A Rayleigh dissipation function will be of the form
$$ {\cal R} = \half \bfomega^T C_0 \bfomega = \half \dot{\qq}^T J(\qq)^T C_0 J(\qq) \dot{\qq}\,, $$
indicating that $C(\qq) = J(\qq)^T C_0 J(\qq)$.
Then converting (\ref{iniertial2}) to the Hamiltonian form, the viscous and noise terms become
$$ J(\qq)^T[-C_0 I_{0}^{-1} \bfpi dt + B_0 d{\bf w}] = -J(\qq)^T C_0 I_{0}^{-1} I_{0} J(\qq) d\qq + J(\qq)^T B_0 d{\bf w}\,. $$
If $C(\qq) = J(\qq)^T C_0 J(\qq)$ and $B(\qq) = J(\qq)^T B_0$, and if $J$ is invertible, then the condition
in (\ref{eqdist332}) then becomes completely equivalent to
\beq
C_0 = \frac{\beta}{2} B_0 B_0^T
\label{newbc}
\eeq
The structure of the $C_0$ matrix for a rigid body is a function of its shape.
For example, the viscous drag on an ellipsoid was characterized in \cite{jeffreys1922}.
Given $C_0$, it is possible to define $B_0 = C_{0}^{1/2}$.


When $I = \II$ and $V=0$ (\ref{iniertial2}) becomes
\beq
d{\bfpi} \,=\, -  \frac{\beta}{2} B_0 B_0^T \bfpi dt + B_0 d{\bf w} \,.
\label{iniertialnoV}
\eeq
This is an Ornstein-Uhlenbeck process, and the corresponding Fokker-Planck equation can be solved for
$f(\bfpi,t)$ in closed form as a time-varying Gaussian if the initial conditions are $f(\bfpi,0) = \delta(\bfpi)$.
The equilibrium solution is the Boltzmann distribution
$$ f_{\infty}(\bfpi) = c(\beta) \exp\left(-\frac{\beta}{2} \|\bfpi\|^2   \right)  $$
where $c(\beta)$ is the usual normalizing constant for a Gaussian distribution.

\subsubsection{Noninertial Theory}

When the inertia is negligible, as it is in the case of rotational Brownian motion of molecules, then (\ref{newbc}) and (\ref{iniertial1}) give
\beq\bfomega dt \,=\, B_1 d{\bf w} \,\,\,{\rm where}\,\,\, B_1 = \frac{2}{\beta} B_0^{-T} \,.
\,.
\label{iniertial3}
\eeq
This can be expressed in coordinates as a Stratonovich equation
\beq
\dot{\qq} = J^{-1}(\qq) B_1 {\bf w} \,,
\label{rotfp233d}
\eeq
or it can be kept in the invariant form (\ref{iniertial3}). The corresponding Fokker-Planck equation is of the form
$$ \frac{\partial f}{\partial t} = \sum_{i,j=1}^{3} D_{ij} \tilde{E}_i \tilde{E}_j f $$
where $D = B_1 B_1^T$ and each $\tilde{E}_i$ is as in (\ref{deriv})
with $X=E_i$.

The short-time solution to this equation subject to Dirac delta initial conditions is the Gaussian
in exponential coordinates. Hence, for short times, entropy and entropy rate can be computed in closed form using the results from the Euclidean case.

In noninertial theory a special case is isotropic diffusions. Let
$$ \nabla^2 \,\doteq\, \tilde{E}_1^2 + \tilde{E}_2^2 + \tilde{E}_3^2 \,. $$
An isotropic driftless diffusion on $SO(3)$ is one of the form
\begin{equation} \label{eqn:heat_SO3}
\frac{\partial f}{\partial t} = K \nabla^2 f \,.
\end{equation}
The heat kernel for $SO(3)$ is the solution to this subject to
the initial condition $f(R,0) = \delta (R)$.

Rotation matrices can be expressed in terms of the axis and angle of rotation using Euler's formula
$$ R(\theta,\nu,\lambda) \,=\, \exp[\theta \hat{\bf n}(\nu,\lambda)] \,=\, \II + \sin\theta \, \hat{\bf n}(\nu,\lambda) + (1-\cos \theta)\, [\hat{\bf n}(\nu,\lambda)]^2 $$
where $\hat{\bf n}(\nu,\lambda)$ is a skew symmetric matrix such that for arbitrary vector
${\bf v} \in \IR^3$ and vector cross product $\times$,
$$ \hat{\bf n}(\nu,\lambda) {\bf v} =
{\bf n}(\nu,\lambda) \times {\bf v}$$
and
$$ {\bf n}(\nu,\lambda) = \left(\begin{array}{c}
\sin \nu \cos \lambda \\
\sin \nu \sin \lambda \\
\cos \nu
\end{array}\right). $$
There are several different ways to choose the ranges of these coordinates to fully parameterize $SO(3)$.
One way is to view these coordinates as a solid ball of radius $\pi$ in which $\theta \in [0,\pi]$ serves
as the radius and $\nu \in [0,\pi]$ and $\lambda \in [0,2\pi)$, are the usual spherical angles.
Another way is to let $\theta \in [0,2\pi)$ and cut the range of one of the other variables in half.
For example $\nu \in [0,\pi/2]$ and $\lambda \in [0,2\pi)$ would restrict ${\bf n}$ to the upper hemisphere and
$\nu \in [0,\pi]$ and $\lambda \in [0,\pi)$ would be like the western hemisphere (if the initial datum is chosen appropriately). In these hemispherical boundary models, the great circle that bounds the hemisphere will be half open and half closed so as not to redundantly parameterize.

There are benefits to each of these parameterizations. For example, allowing the $[0,2\pi)$ range for
$\theta$ reflects that for fixed ${\bf n}$ rotations around a fixed axis bring back to the same location. That is
the `little group' of rotations around ${\bf n}$ isomorphic to $SO(2)$ is the `maximal torus' in $SO(3)$.
Likewise parameterizing the whole sphere has value. For this reason, the best of both worlds can be achieved
by double covering rotations by allowing both ranges to be expanded.
Moreover, each range $[0,2\pi)$ can be replaced with $[-\pi,\pi)$. Then when performing integration all that
needs to be done is to divide by $2$ afterwards.

Using these parameters, the integration measure $dR$ such that the volume of $SO(3)$ is normalized to $1$ is
\beq
dR \,=\, \frac{1}{4\pi^2} \sin^2(\theta/2) \sin{\nu}\,d\theta d\lambda d\nu \,.
\label{normhaar}
\eeq
When computing integrals,
\begin{eqnarray*}
\int_{SO(3)} f(R)\,dR \,&=&\, \frac{1}{2\pi^2}
\int_{\nu = 0}^{\pi} \int_{\lambda = 0}^{2\pi} \int_{\theta = 0}^{\pi}
f(R(\theta,\nu,\lambda)) \sin^2(\theta/2) \sin{\nu}\,d\theta d\lambda d\nu \\
\,&=&\, \frac{1}{2\pi^2}
\int_{\nu = 0}^{\pi/2} \int_{\lambda = 0}^{2\pi} \int_{\theta = -\pi}^{\pi}
f(R(\theta,\nu,\lambda)) \sin^2(\theta/2) \sin{\nu}\,d\theta d\lambda d\nu \\
\,&=&\, \frac{1}{2\pi^2}
\int_{\nu = 0}^{\pi} \int_{\lambda = 0}^{\pi} \int_{\theta = -\pi}^{\pi}
f(R(\theta,\nu,\lambda)) \sin^2(\theta/2) \sin{\nu}\,d\theta d\lambda d\nu \\
\end{eqnarray*}
Doubling the range gives
\begin{equation}
\int_{SO(3)} f(R)\,dR \,=\, \frac{1}{4\pi^2}
\int_{\nu = 0}^{\pi} \int_{\lambda = 0}^{2\pi} \int_{\theta = -\pi}^{\pi}
f(R(\theta,\nu,\lambda)) \sin^2(\theta/2) \sin{\nu}\,d\theta d\lambda d\nu
\label{volso3eul}
\end{equation}
and when $f(R(\theta,\nu,\lambda)) = f(\theta) = f(-\theta)$
\begin{eqnarray}
\int_{SO(3)} f(R)\,dR \,&=&\, \frac{2}{\pi}
\int_{\theta = 0}^{\pi}
f(\theta) \sin^2(\theta/2) \,d\theta  \nonumber \\
\,&=&\, \frac{1}{\pi}
\int_{\theta = -\pi}^{\pi}
f(\theta) \sin^2(\theta/2) \,d\theta \,.
\label{normpdf3fefef}
\end{eqnarray}

All normalizations are such that
$$ \int_{SO(3)} 1\,dR \,=\,1\,. $$

The Laplacian operator for $SO(3)$ in this axis-angle parameterization is \cite{harmonic,8varshalovich,9gelfand}
\begin{equation}
\nabla^2 = \frac{\partial^2}{\partial \theta^2}
+ \cot \theta/2 \frac{\partial}{\partial \theta} +
\frac{1}{4 \sin^2 \theta/2} \left(\frac{\partial^2}{\partial \nu^2}
+ \cot \nu \frac{\partial}{\partial \nu} +
\frac{1}{\sin^2 \nu} \frac{\partial^2}{\partial \lambda^2}\right)
\label{so3Laplaciandef}
\end{equation}

It can be shown that the isotropic solution does not depend on $\nu$ or $\lambda$, and so all that needs to be solved is \cite{14chetelat,19Lee05}
\begin{equation} \label{eqn:heat_eqn_SO3}
\frac{\partial f}{\partial t} = K \left( \frac{\partial^2 f}{\partial \theta^2} + \cot \theta / 2 \frac{\partial f}{\partial \theta} \right)
\end{equation}
subject to initial condition $f(R,0) = \delta(R)$.

A basis for all functions on $SO(3)$ that depend only on $\theta$ are the functions $\{\chi_l(\theta) \,|\,
l \in \mathbb{Z}_{\geq 0}\}$ where
$$ \chi_l(\theta) \,=\,  \frac{\sin (l + \frac{1}{2}) \theta}{\sin \frac{\theta}{2}} \,. $$
These are eigenfunctions of the Laplacian:
$$ \nabla^2 \chi \,=\, -l(l+1)  \chi\,. $$
Consequently, the Fourier series solution to the isotropic heat equation on $SO(3)$ is known to be
\begin{equation} \label{eqn:heat_soln_SO3_1}
\boxed{\,f(R(\theta,\nu,\lambda), t) \,=\, \sum_{l=0}^{\infty} (2l+1) \chi_l(\theta) \, e^{-l(l+1) Kt} \,=\,
\left(\sin \frac{\theta}{2}\right)^{-1} \sum_{l=0}^{\infty} (2l+1) \, {\sin \left((l + 1/2) \theta\right)} \, e^{-l(l+1) Kt} \,.\,}
\end{equation}
Note that
$$ \lim_{t \rightarrow \infty} f(R(\theta,\nu,\lambda), t) \,=\, 1 $$
and
$$ \int_{SO(3)} f(R,t)\,dR \,=\,1\,. $$
When $t=0$, the above becomes the Fourier series for the Dirac delta function
$$ f(R,0) \,=\, \delta(R) \,. $$

As with the case of the heat equation on the circle, an alternative solution exists, analogous to a folded Gaussian. Denote this solution as $\rho(\theta, t)$, and
\beq
f(R(\theta,\nu,\lambda), t) = e^{Kt/4} \left(\sin \frac{\theta}{2}\right)^{-1} \sum_{k \in \mathbb{Z}} \rho(\theta + 2\pi k, t) \,.
\label{fofrt}
\eeq

This can be derived in two steps. First let
$$ f(R(\theta,\nu,\lambda), t) \,=\, \left(\sin \frac{\theta}{2}\right)^{-1} h(\theta,t)\,. $$
Substituting in to (\ref{eqn:heat_eqn_SO3}) and simplifying then gives
\begin{equation}
\frac{\partial h}{\partial t} = K \left( \frac{\partial^2 h}{\partial \theta^2} + \frac{1}{4} h \right) \,.
\end{equation}
Then substituting $h(\theta,t) = e^{at} q(\theta,t)$ and simplifying gives that when $a =K/4$
\begin{equation}
\frac{\partial q}{\partial t} = K \frac{\partial^2 q}{\partial \theta^2} \,.
\label{1Dheatso3}
\end{equation}

The fundamental solution to this heat equation (the 1D heat kernel) is
$$ q(\theta,t) \,=\, \frac{1}{2\sqrt{\pi Kt}} e^{-\theta^2/4Kt}\,. $$
But this solution is not a satisfactory choice for $\rho(\theta,t)$ for several reasons.
First, $SO(3)$ is a 3D space, and so the normalization is not correct, as the normalization factor
should be proportional to $1/t^{3/2}$. But to arbitrarily change the
temporal dependence will cause the result to no longer be a solution to (\ref{1Dheatso3}). Secondly,
whereas division by $\sin(\theta/2)$ is ok in the definition of $\chi_l(\theta)$ because zeros are balanced
by zeros in the numerator, that is not the case here.

There is a way to solve both problems. That is by realizing that if $q(\theta,t)$ solves (\ref{1Dheatso3}), then
so too does $C \partial q/\partial\theta$ for arbitrary constant $C$. Conseqently, we take as a candidate solution \cite{14chetelat,19Lee05}

\begin{equation} \label{eqn:soln_heatlike_eqn}
\rho(\theta, t) = C \frac{1}{(\pi K t)^{3/2}} \theta e^{-\theta^2 / 4Kt} \,.
\end{equation}
The choice of normalizing factor $C$ is made such that (\ref{fofrt}) is a pdf. This is a constant, hence
independent of $t$. Choosing a relatively small value of $t$, the summation in (\ref{fofrt}) reduces to a single term and  (\ref{normpdf3fefef}) becomes

$$ \frac{1}{\pi}
C \frac{e^{Kt/4}}{(\pi K t)^{3/2}} \int_{\theta = -\pi}^{\pi} \theta e^{-\theta^2 / 4Kt} \sin (\theta/2) \,d\theta \,=\, 1\,. $$
Moreover, for small $t$ the integral over $[-\pi,\pi]$ can be replaced with an integral over the whole real line.
Consequently, since
$$ \int_{-\infty}^{\infty} \theta e^{-\theta^2 / 4Kt} \sin (\theta/2) \,d\theta \,=\, 2\sqrt{\pi} (Kt)^{3/2} e^{-Kt/4}  $$
then
$$ C = \pi^2/2 $$
and
\beq
\boxed{\,f(R(\theta,\nu,\lambda), t) =  \frac{\sqrt{\pi}}{2} \frac{e^{Kt/4}}{(K t)^{3/2}}
\left(\sin \frac{\theta}{2}\right)^{-1} \sum_{k \in \mathbb{Z}}
(\theta + 2\pi k) \,e^{-(\theta + 2\pi k)^2 / 4Kt} \,.\,}
\label{fofrta}
\eeq

Since $f(R,t) > 0$ and integrates to $1$ it is a valid pdf. What needs to be tested is
$$ \lim_{t\rightarrow\infty} f(R,t) \,=\, 1 $$
and
$$  \lim_{t\rightarrow 0} f(R,t) \,=\, \delta(R)\,. $$
If these conditions hold, then the solution is valid.

This provides a way to bound entropy in a way similar to the case of the circle.
The next section makes more general exact statements for all values of time.


\subsection{Rate of Entropy Production under Diffusion on Unimodular Lie Groups}

The entropy of a pdf on a Lie group is defined in (\ref{Gent}).
If $f(g,t)$ is a pdf that satisfies a diffusion equation, then some interesting properties of $S_f(t)$ that are independent of initial conditions result. For example, if $\dot{S}_{f} = dS_f/dt$, then differentiating
under the integral gives
\bea
\dot{S}_{f} &=& - \int_G \left\{\frac{\partial f}{\partial t} \log f + \frac{\partial f}{\partial t}
\right\} \,dg .
\eea
Moreover, since $f$ is a pdf,
$$ \int_{G} \frac{\partial f}{\partial t} \,dg =
\frac{d}{dt} \int_G f(g,t) \,dg = 0 \,. $$
and so the second term in braces in the expression for $\dot{S}_{f}$ integrates to zero.

Substituting the diffusion equation
\beq
\frac{\partial f}{\partial t} = \half \sum_{i,j=1}^{n} D_{ij} \tilde{E}_i \tilde{E}_j f
- \sum_{k=1}^{n} a_k \tilde{E}_k f
\label{difflie}
\eeq
into the expression for $\dot{S}_{f}$ gives \cite{stochastic,infomatr}
\bea
\dot{S} &=& - \int_G \left\{
\half \sum_{i,j=1}^{n} D_{ij} \tilde{E}_i \tilde{E}_j f
- \sum_{k=1}^{n} a_k \tilde{E}_k f
\right\} \log f \, dg \\
&=& - \half \sum_{i,j=1}^{n} D_{ij} \int_G (\tilde{E}_i \tilde{E}_j f) \log f \, dg
+ \sum_{k=1}^{n}  a_k \int_G (\tilde{E}_k f) \, \log f \, dg \\
&=& \half \sum_{i,j=1}^{n} D_{ij} \int_G  (\tilde{E}_j f) (\tilde{E}_i\log f)\, dg
- \sum_{k=1}^{n} a_k \int_G f \, (\tilde{E}_k \log f) \, dg \\
&=& \half \sum_{i,j=1}^{n} D_{ij} \int_G  \frac{1}{f} (\tilde{E}_j f) (\tilde{E}_i f)\, dg - \sum_{k=1}^{n} a_k \int_G \tilde{E}_k f \, dg \\
&=& \half \sum_{i,j=1}^{n} D_{ij} \int_G  \frac{1}{f} (\tilde{E}_j f) (\tilde{E}_i f)\, dg \\
&=& \half {\rm tr}[DF]
\eea
where $F = [F_{ij}]$ is the Lie-group version of the Fisher information matrix with entries
$$ F_{ij} \,\doteq\, \int_G  \frac{1}{f} (\tilde{E}_j f) (\tilde{E}_i f)\, dg \,. $$
Consequently,
$$ \half \lambda_{min}(D) {\rm tr}[F] \,\leq\, \dot{S} \,\leq\, \half \lambda_{max}(D) {\rm tr}[F]\,. $$
The above result is an extension of one presented in \cite{stochastic,infomatr}.

\subsection{The Generalized de Briujn Identity}

Here a theorem derived in \cite{stochastic,infomatr} is restated.

\begin{theorem} \label{th7.1}
Let the solution of the diffusion equation (\ref{difflie}) with constant ${\bf a} = [a_1,...,a_n]^T$ subject to the initial condition $f(g,0;D,{\bf a}) = \delta(g)$ be denoted as
$f_{D, {\bf a},t}(g) = f(g,t;D,{\bf a})$. Let $\alpha(g)$ be another differentiable pdf on the group. Then
\beq
\frac{d}{dt} S(\alpha* f_{D, {\bf a},t}) = \half {\rm tr}[D F(\alpha * f_{D, {\bf a},t})].
\label{debru323ii205}
\eeq
\end{theorem}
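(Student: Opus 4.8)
The plan is to reduce the statement to the entropy-rate computation already carried out in the preceding subsection for a general pdf satisfying the diffusion equation (\ref{difflie}). Set $u_t \doteq \alpha * f_{D,{\bf a},t}$. Since $\alpha$ and $f_{D,{\bf a},t}$ are both pdfs on $G$, their convolution (\ref{convdef}) is again a pdf, so $u_t$ is an admissible density to which the entropy-rate formula applies. The crux is to verify that $u_t$ solves the very same diffusion equation (\ref{difflie}) that $f_{D,{\bf a},t}$ does; once that is established the identity (\ref{debru323ii205}) is immediate.

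First I would establish the commutation of the invariant derivatives $\tilde{E}_i$ with left convolution by $\alpha$. Writing $(\alpha*f)(g) = \int_G \alpha(h) f(h^{-1}g)\,dh$ and differentiating under the integral using the definition (\ref{deriv}),
\[
(\tilde E_i(\alpha*f))(g) = \left.\frac{d}{dt}\int_G \alpha(h)\, f(h^{-1}g\exp(tE_i))\,dh\right|_{t=0} = \int_G \alpha(h)\,(\tilde E_i f)(h^{-1}g)\,dh = (\alpha*(\tilde E_i f))(g),
\]
so $\tilde E_i$ acts only on the second convolution factor. Iterating gives $\tilde E_i\tilde E_j(\alpha*f) = \alpha*(\tilde E_i\tilde E_j f)$, and hence the full generator $\mathcal L \doteq \half\sum_{i,j}D_{ij}\tilde E_i\tilde E_j - \sum_k a_k\tilde E_k$ commutes with $\alpha*(\cdot)$.

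Next, using $\partial_t f_{D,{\bf a},t} = \mathcal L f_{D,{\bf a},t}$ and differentiating the convolution in $t$ under the integral,
\[
\partial_t u_t = \alpha*(\partial_t f_{D,{\bf a},t}) = \alpha*(\mathcal L f_{D,{\bf a},t}) = \mathcal L(\alpha*f_{D,{\bf a},t}) = \mathcal L u_t,
\]
so $u_t$ satisfies (\ref{difflie}). I would then invoke the entropy-rate derivation of the preceding subsection verbatim with $f$ replaced by $u_t$: integration by parts against $\log u_t$ converts the second-order term into the quadratic Fisher form, while the drift term contributes $-\sum_k a_k\int_G \tilde E_k u_t\,dg = 0$ since the integral of an invariant derivative over a unimodular group vanishes. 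This yields $\dot S(u_t) = \half\,{\rm tr}[D\, F(u_t)]$, which is exactly (\ref{debru323ii205}).

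The main obstacle is analytic rather than algebraic: justifying the two interchanges of differentiation and integration (in $g$ through $\tilde E_i$, and in $t$), together with the vanishing of the surface terms in the integration by parts. These are underwritten by the smoothing property of the heat-type kernel $f_{D,{\bf a},t}$ for $t>0$, which makes $u_t$ smooth and, on a noncompact factor such as in $SE(2)$, rapidly decaying, so that all the manipulations above are legitimate; the differentiability of $\alpha$ assumed in the hypothesis is what secures regularity uniformly as $t\to 0$.
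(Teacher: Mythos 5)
Your proposal is correct and follows essentially the same route as the paper: identify $\alpha * f_{D,{\bf a},t}$ as the solution of (\ref{difflie}) with initial datum $\alpha$, then run the entropy-rate/Fisher-information computation (integration by parts, vanishing drift term) on that density. The only difference is presentational: you explicitly prove the commutation $\tilde{E}_i(\alpha * f) = \alpha * (\tilde{E}_i f)$ to justify that the convolution solves the diffusion equation, a fact the paper simply asserts, while the paper writes out the entropy-rate calculation that you invoke by reference.
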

\begin{proof}
The solution of the diffusion equation
\beq
\frac{\partial \rho}{\partial t} = \half \sum_{i,j=1}^{n} D_{ij} \tilde{E}_i \tilde{E}_j \rho
- \sum_{k=1}^{n} a_k \tilde{E}_k \rho
\label{skk3k2s}
\eeq
subject to the initial conditions $\rho(g,0) = \alpha(g)$
 is $\rho(g,t) = (\alpha * f_{D,{\bf a},t)}(g)$. Then
computing the derivative of $S(\rho(g,t))$ with respect to time yields
\beq
\frac{d}{dt} S(\rho) = -\frac{d}{dt} \int_G \rho(g,t) \log \rho(g,t) \,dg = -\int_G \left\{
\frac{\partial \rho}{\partial t} \log \rho + \frac{\partial \rho}{\partial t} \right\} \,dg.
\label{skk322499}
\eeq
By substituting in (\ref{skk3k2s}), the partial with respect to time can be replaced with Lie derivatives.
But
$$ \int_{G} \tilde{E}_k \rho \,dg = \int_{G} \tilde{E}_i \tilde{E}_j \rho \,dg = 0 \,. $$
Consequently, the second term on the right side of (\ref{skk322499})
completely disappears. Using the integration-by-parts formula\footnote{There are no
surface terms. As with the circle and real line, each coordinate in the integral either wraps around
or goes to infinity.}
$$ \int_{G} f_1 \, \tilde{E}_k f_2 \,dg = - \int_{G} f_2 \, \tilde{E}_k f_1 \,dg, $$
with $f_1 = \log \rho$ and $f_2 = \rho$ then gives
\bea
 \frac{d}{dt} S(\alpha * f_{D,{\bf a},t}) &=& \half \sum_{i,j=1}^{n} D_{ij} \int_G  \frac{1}{\alpha * f_{D,{\bf a},t}} \tilde{E}_j (\alpha * f_{D,{\bf a},t}) \tilde{E}_i (\alpha * f_{D,{\bf a},t})\, dg \\
&=& \half \sum_{i,j=1}^{n} D_{ij} F_{ij}(\alpha * f_{D,{\bf a},t}) \, = \, \half {\rm tr}\left[D \, F(\alpha * f_{D,{\bf a},t})\right].
\eea
This means that
$$ S(\alpha * f_{D,{\bf a},t_2}) - S(\alpha * f_{D,{\bf a},t_1}) = \half \int_{t_1}^{t_2} {\rm tr}\left[D F(\alpha * f_{D,{\bf a},t})\right] dt. $$
\end{proof}

Whereas some inequalities of Information Theory generalize to the Lie group setting, as demonstrated above, others do not. For example under convolution on a Lie group (\ref{fishrffftight}) and (\ref{entpowine932}) do not hold in general. As for the CRB, there are versions for Lie groups applicable to small values of $Dt$ or with different concepts of covariance, but not in a way that is directly applicable to the scenarios formulated here.

%
%

\section{Conclusions}
Stochastic mechanical systems can describe individual representatives of a statistical mechanical ensemble (as in a rotor in rotational Brownian motion), or can be a stand-alone system subjected to noise, such as a kinematic cart robot. When these systems have mass, \Ito and Stratonovich SDEs lead to the same Fokker-Planck equation. Even in the case of inertia-free kinematic systems evolving on Lie groups, these can be the same. From the Fokker-Planck equation, the rate of entropy can be computed. For diffusion processes (with or without drift), the rate of entropy increase is related simply to the diffusion matrix and the Fisher information matrix. This result holds both in Euclidean spaces and on unimodular Lie groups (including cotangent bundle groups), which are a common configuration space for mechanical systems. As systems approach equilibrium, the entropy rate approaches zero. Two different ways to approach equilibrium are discussed: 1) when there is a restoring potential; 2) when the configuration space is bounded. By using the monotonicity and convexity properties of the logarithm function together with inequalities from Information Theory, computable bounds on entropy and entropy rate are established.

\end{document}